\numberwithin{equation}{section}
\newtheorem{ass}{Assumption}
\newtheorem{defn}{Definition}
\newtheorem{thm}{Theorem}
\newtheorem{prop}{Proposition}
\newtheorem{exam}{Example}
\newtheorem{cthm}{Theorem}
\numberwithin{cthm}{section}
\def\Pr{\mathbb{P}}
\def\Ex{\mathbb{E}}
\def\Rl{\mathbb{R}}
\def\var{{\rm Var\hskip 0.2pt}}
\def\VAR{{\rm VAR\hskip 0.2pt}}
\def\const{{\rm const}}
\def\p{{^\prime}}
\def\t{{^\top}}
\def\d{{\rm d}}
\def\e{{\rm e}}
\def\kr{{\scriptscriptstyle \bullet}}
\def\th{\theta}
\def\eps{\varepsilon}
\def\yo{y_{\rm obs}}
\def\yobs{y_{\rm obs}}
\def\llk{\ell}
\def\llm{\ell_m}
\def\c{c}
\def\mc{\hat{c}}
\def\ic{\hat d}
\def\tr{{\rm tr}}
\def\hopt{{h_\star}}
\def\w{w}
\def\W{W}
\def\Ys{Y^\star}
\def\ker{P}
\def\ths{{\th_\star}}
\def\psis{\psi_\star}
\def\them{\hat \th_m}
\def\ef{\mathcal{F}}
\def\Y{\mathcal{Y}}
\def\grad{\nabla}
\def\hess{\nabla^2}
\def\topr{\xrightarrow{\rm p}}
\def\toas{\xrightarrow{\rm a.s.}}
\def\tod{\xrightarrow{\rm d}}
\begin{document}
\frontmatter          
\pagestyle{headings}  
\addtocmark{MCML} 
\mainmatter              
\title{Adaptive Monte Carlo Maximum Likelihood}
\titlerunning{Adaptive MCML}  
%
\author{B{\l}a\.{z}ej Miasojedow\inst{1} \and Wojciech Niemiro\inst{1,3} \and
Jan Palczewski\inst{2} \and Wojciech Rejchel\inst{3}}
\authorrunning{B. Miasojedow et al.} 
%
\tocauthor{B{\l}a\.{z}ej Miasojedow, Wojciech Niemiro,
Jan Palczewski, Wojciech Rejchel}
\institute{
Faculty of Mathematics, Informatics and Mechanics,
University of Warsaw, Warsaw, Poland\\
\email{W.Miasojedow@mimuw.edu.pl\\ W.Niemiro@mimuw.edu.pl}
\and
School of Mathematics, University of Leeds, Leeds, UK\\
\email{J.Palczewski@leeds.ac.uk} 
\and
Faculty of Mathematics and Computer Science,
Nicolaus Copernicus University, Toru\'n, Poland\\
\email{wrejchel@gmail.com}
}

\maketitle              

\begin{abstract}
We consider Monte Carlo approximations to the maximum likelihood estimator in  models
with intractable norming constants. This paper deals with adaptive Monte Carlo algorithms,
which adjust control parameters in the course of simulation. We examine asymptotics of
adaptive importance sampling and a new algorithm, which uses resampling and MCMC. This algorithm is
designed to reduce problems with degeneracy of importance weights.
Our analysis is based on martingale limit theorems. We also describe how adaptive
maximization algorithms of Newton-Raphson type can be combined with the resampling techniques.
The paper includes results of a small scale simulation study in which we compare the performance
of adaptive and non-adaptive Monte Carlo maximum likelihood algorithms.
\keywords{maximum likelihood, importance sampling, adaptation, MCMC, resampling}
\end{abstract}

\section{Introduction}\label{Sec:Introduction}
Maximum likelihood (ML) is a well-known and often used method in estimation
of parameters in statistical models. However, for many complex models exact calculation of such estimators
is very difficult or impossible. Such problems arise if considered densities are known only up to intractable
norming constants, for instance in Markov random fields or spatial statistics. The wide range of applications
of models with unknown norming constants is discussed e.g.\  in \cite{MPRB2006}.
Methods proposed to overcome the problems with computing ML estimates in such models include, among others,
maximum pseudolikelihood \cite{Besag1974}, ``coding method'' \cite{HuWu1998} and
Monte Carlo maximum likelihood (MCML) \cite{GeyerThom1992}, \cite{WuHu1997}, \cite{HuWu1998}, \cite{Imput2010}.
In our paper we focus on MCML.
\goodbreak

In influential papers \cite{GeyerThom1992}, \cite{Geyer1994} the authors prove consistency and asymptotic normality of
MCML estimators. To improve the performance of MCML, one can  adjust control parameters in the course of simulation.
This leads to adaptive MCML algorithms. We generalize the results of the last mentioned papers first to
an adaptive version of importance sampling and then to a more complicated adaptive algorithm which uses resampling
and Markov chain Monte Carlo (MCMC) \cite{ISREMC}. Our analysis is asymptotic and it is  based on the martingale structure
of the estimates.
The main motivating examples are the autologistic model (with or without covariates) and its applications to spatial
statistics as described e.g.\ in \cite{HuWu1998} and the autonormal model \cite{Pettitt2002}.

\section{Adaptive Importance Sampling}\label{AIS}

Denote by $f_\theta$, $\theta \in \Theta$, a family of unnormalized densities on space $\Y$.  A dominating measure with respect to which
these densities are defined is denoted for simplicity by $\d y$. Let $\yo$ be an observation.
We intend to find the maximizer $\ths$ of the log-likelihood
\begin{equation}\nonumber
\llk(\theta) = \log f_\theta(\yo) - \log \c(\theta),
\end{equation}
where $\c(\theta)$ is the normalizing constant. We consider the situation  where this constant,
\begin{equation}\nonumber
 \c(\theta)=\int_\Y f_\theta(y)\d y,
\end{equation}
is \textit{unknown and numerically intractable}.
It is approximated with  Monte Carlo simulation, resulting in
\begin{equation}\label{MCloglik}
\llm(\theta) =  \log f_\theta(\yo) - \log \mc_m(\theta),
\end{equation}
where $\mc_m(\th)$ is a Monte Carlo (MC) estimate of $\c(\th)$. The classical importance sampling (IS)
estimate is of the form
\begin{equation}\label{ISestim}
 \mc_m(\theta) = \frac{1}{m} \sum_{j=1}^m \frac{f_\theta(Y_j)}{h(Y_j)},
\end{equation}
where $Y_1,\ldots,Y_m$ are i.i.d.\ samples from an instrumental density $h$.
Clearly, an optimal choice of $h$ depends on the maximizer $\ths$ of $\llk$, so we should be able to improve our
initial guess about $h$ while the simulation progresses. This is the idea behind \textit{adaptive importance sampling}
(AIS). A discussion on the choice of instrumental density is deferred to subsequent subsections.
Let us describe an adaptive algorithm in the following form, suitable for further generalizations.
Consider a parametric family $h_\psi$, $\psi\in\Psi$ of instrumental densities.
\bigskip\goodbreak

\begin{center}
 \tt  Algorithm AdapIS
\end{center}
\begin{enumerate}\setlength{\parskip}{0pt}\tt
\item Set an initial value of $\psi_1$, $m=1$, $\mc_0(\theta) \equiv 0$.
\item Draw $Y_m \sim h_{\psi_m}$.
\item Update the approximation of $c(\theta)$:
\begin{equation}\nonumber
 \mc_{m}(\theta) = \frac{m-1}{m} \mc_{m-1}(\theta) + \frac{1}{m} \frac{f_\th(Y_m)}{h_{\psi_m}(Y_m)}.
\end{equation}
\item Update $\psi$: choose $\psi_{m+1}$ based on the history of the simulation.
\item $m=m+1$; go to 2.
\end{enumerate}
\goodbreak

At the output of this algorithm we obtain an AIS estimate
\begin{equation}\label{AISestim}
 \mc_m(\theta) = \frac{1}{m} \sum_{j=1}^m \frac{f_\theta(Y_j)}{h_{\psi_j}(Y_j)}.
\end{equation}
The samples $Y_j$ are neither independent nor have the same distribution. However (\ref{AISestim})
has a nice \textit{martingale} structure.
If we put
\begin{equation}\nonumber
 \ef_{m} = \sigma \left\{ Y_j,\psi_j: j\leq m\right\}
\end{equation}
then $\psi_{m+1}$ is $\ef_{m}$-measurable. The well-known property of unbiasedness of IS
implies that
\begin{equation}\label{MG0}
\Ex\left( \frac{f_\theta(Y_{m+1})}{h_{\psi_{m+1}}(Y_{m+1})}\Big|\ef_{m}\right)=c(\th).
\end{equation}
In other words, ${f_\theta(Y_{m})}/{h_{\psi_{m}}(Y_{m})}-c(\th)$
are martingale differences (MGD),  for every fixed $\th$.
\subsection{Hypo-convergence of $\llm$ and consistency of $\them$}
\newcommand{\ee}{\Ex}%
\newcommand{\en}{\mathcal{N}}%
\newcommand{\ve}{\varepsilon}%

In this subsection we make the following assumptions.

\begin{ass}\label{asJanka}
For any $\theta \in \Theta$
\begin{equation}\nonumber
   \sup_\psi \int \frac{f_\theta(y)^2}{h_{\psi}(y)}\d y<\infty.
\end{equation}
\end{ass}

\begin{ass}\label{ContinuityTheta}
 The mapping $\theta \mapsto f_\theta(y)$ is continuous for each fixed $y$.
\end{ass}
Assumption \ref{asJanka} implies that for any $\theta$, {there is a constant $M_\theta < \infty$ such that for all $j$,
\begin{equation}\nonumber
\Ex \left(\bigg(\frac{f_\theta(Y_j)}{h_{\psi_j}(Y_j)}\bigg)^2 \bigg|\ef_{j-1}\right)\le M_\theta,\quad \text{a.s.},
\end{equation}
because $Y_j \sim h_{\psi_j}$. Note that Assumption \ref{asJanka}
is trivially true if the mapping $y \mapsto f_\theta(y) / h_{\psi}(y)$ is uniformly bounded for $\theta\in\Theta$, $\psi\in\Psi$.
Recall also that
\begin{equation}\nonumber
m(\mc_m(\th)-\c(\th)) = \sum_{j=1}^m \Big( \frac{f_\theta(Y_j)}{h_{\psi_j}(Y_j)} - \c(\theta) \Big)
\end{equation}
is a zero-mean martingale.
Under Assumption \ref{asJanka}, for a fixed $\theta \in \Theta$, we have
$\mc_m(\theta) \to \c(\theta)$ a.s.\  by the SLLN for martingales (see Theorem \ref{SLLN}, Appendix \ref{Martingales}), so
$\llm(\theta) \to \llk(\theta)$ a.s. This is, however, insufficient to guarantee the convergence
of maximum likelihood estimates $\them$ (maximizers of $\llm$) to $\ths$. Under our assumptions we can
prove hypo-convergence of the log-likelihood approximations.

\begin{defn}
A sequence of functions $g_m$ epi-converges to $g$ if for any $x$ we have
\begin{align*}
&\sup_{B \in N(x)} \limsup_{m \to \infty} \inf_{y \in B} g_m(y) \le g(x),\\
&\sup_{B \in N(x)} \liminf_{m \to \infty} \inf_{y \in B} g_m(y) \ge g(x),
\end{align*}
where $N(x)$ is a family of all (open) neighbourhoods of $x$.

A sequence of functions $g_m$ hypo-converges to $g$ if $(-g_m)$ epi-converges to $(-g)$.
\end{defn}
An equivalent definition of epi-convergence is in the following theorem:
\begin{thm}(\cite[Proposition 7.2]{Rockafellar2009})\label{thm:epi-conv}
$g_m$ epi-converges to $g$ iff at every point $x$
\begin{align*}
&\limsup_{m \to \infty} g_m(x_m) \le g(x), \qquad \text{for some sequence $x_m \to x$,}\\
&\liminf_{m \to \infty} g_m(x_m) \ge g(x), \qquad \text{for every sequence $x_m \to x$.}
\end{align*}
\end{thm}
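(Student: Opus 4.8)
The plan is to route everything through the two \emph{epigraphical limit functions}
\[
\underline{g}(x) := \sup_{B \in N(x)} \liminf_{m \to \infty} \inf_{y \in B} g_m(y),
\qquad
\overline{g}(x) := \sup_{B \in N(x)} \limsup_{m \to \infty} \inf_{y \in B} g_m(y),
\]
so that the two defining inequalities of epi-convergence read exactly $\overline{g}(x) \le g(x)$ and $\underline{g}(x) \ge g(x)$. The theorem then reduces to two representation formulas, which I would establish as the core lemmas:
\[
\underline{g}(x) = \min\bigl\{ \liminf_{m \to \infty} g_m(x_m) : x_m \to x \bigr\},
\qquad
\overline{g}(x) = \min\bigl\{ \limsup_{m \to \infty} g_m(x_m) : x_m \to x \bigr\},
\]
with the minima attained. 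Granting these, $\overline{g}(x) \le g(x)$ holds iff some recovery sequence satisfies $\limsup_m g_m(x_m) \le g(x)$, which is the first inequality of the theorem (the one involving $\limsup$); and $\underline{g}(x) \ge g(x)$ holds iff \emph{every} sequence $x_m \to x$ satisfies $\liminf_m g_m(x_m) \ge g(x)$, which is the second inequality. Attainment of the $\underline{g}$-minimum is exactly what converts the universal quantifier ``for every sequence'' into the single inequality $\underline{g}(x) \ge g(x)$.

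Since the relevant domain $\Theta$ is a subset of $\mathbb{R}^d$, I would fix a countable shrinking neighbourhood basis $B_k \downarrow \{x\}$ (e.g.\ open balls of radius $1/k$). The map $B \mapsto \inf_{y \in B} g_m(y)$ increases as $B$ shrinks, hence so do $\liminf_m \inf_{B} g_m$ and $\limsup_m \inf_{B} g_m$; the suprema over $N(x)$ are therefore realized along $B_k$, giving $\underline{g}(x) = \lim_k \liminf_m \inf_{B_k} g_m$ and similarly for $\overline{g}$. The easy (lower-bound) halves of the formulas are then immediate: for any $x_m \to x$ and any $k$, eventually $x_m \in B_k$, so $\inf_{B_k} g_m \le g_m(x_m)$, whence $\liminf_m \inf_{B_k} g_m \le \liminf_m g_m(x_m)$ and $\limsup_m \inf_{B_k} g_m \le \limsup_m g_m(x_m)$; taking $\sup_k$ yields $\underline{g}(x) \le \liminf_m g_m(x_m)$ and $\overline{g}(x) \le \limsup_m g_m(x_m)$ for every sequence.

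The substance is the construction of the attaining (recovery) sequences. For $\underline{g}$, writing $\alpha_k := \liminf_m \inf_{B_k} g_m \uparrow \underline{g}(x)$, I would pick for each $k$ an increasing index $m_k$ and a point $x_{m_k} \in B_k$ with $g_{m_k}(x_{m_k}) < \alpha_k + 1/k$, which is possible because the $\liminf$ is approached along infinitely many $m$; setting $x_m = x$ for the remaining indices gives $x_m \to x$ and $\liminf_m g_m(x_m) \le \lim_k(\alpha_k + 1/k) = \underline{g}(x)$, so with the lower bound the minimum is attained. The harder case, and the main obstacle, is $\overline{g}$, since $\limsup$ control must hold for \emph{all} large $m$ rather than along a subsequence. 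Writing $\beta_k := \limsup_m \inf_{B_k} g_m \uparrow \overline{g}(x)$, for each $k$ there is an increasing $M_k$ with $\inf_{B_k} g_m < \beta_k + 1/k$ for every $m \ge M_k$; I would then take $x_m \in B_k$ to be a near-minimizer of $g_m$ over $B_k$ for $M_k \le m < M_{k+1}$ (and $x_m = x$ for $m < M_1$). This genuine diagonalization yields $x_m \to x$ and $g_m(x_m) < \beta_k + 1/k$ for all $m \ge M_k$, hence $\limsup_m g_m(x_m) \le \overline{g}(x)$, matching the lower bound. The cases $\underline{g}(x) = +\infty$ or $\overline{g}(x) = +\infty$ are trivial, as then every sequence already realizes the value. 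Assembling the two formulas with the reading of the defining inequalities above closes the equivalence.
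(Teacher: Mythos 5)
Your proof is correct, but there is nothing in the paper to compare it against: the paper does not prove this statement at all, it quotes it as a known result (Proposition 7.2 of Rockafellar--Wets \cite{Rockafellar2009}) and uses it as a black box in the proof of Proposition \ref{EpiConsistency} and Theorem \ref{thm:hypoconv}. On its own merits, your argument is a sound, self-contained version of the standard one: you introduce the lower and upper epi-limit functions $\underline{g},\overline{g}$, observe that epi-convergence means exactly $\overline{g}(x)\le g(x)$ and $\underline{g}(x)\ge g(x)$, and reduce the theorem to the two representation formulas with \emph{attained} minima. The quantifier bookkeeping is right: the ``easy halves'' $\underline{g}(x)\le\liminf_m g_m(x_m)$ and $\overline{g}(x)\le\limsup_m g_m(x_m)$ for every sequence follow from eventual membership $x_m\in B_k$, and the real content is the construction of attaining sequences, where you correctly identify the $\overline{g}$ case as the harder one (control needed for all large $m$, not along a subsequence) and resolve it by the block diagonalization $M_k\le m<M_{k+1}$ with $x_m$ a near-minimizer over $B_k$. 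This is more elementary than the route in Rockafellar--Wets, who deduce the proposition from their set-convergence machinery (inner and outer limits of epigraphs); your proof trades that generality for directness. Two small caveats, neither a genuine gap: (i) the sequential characterization needs first countability, which you implicitly invoke by fixing the basis $B_k$ --- legitimate here since the parameter space sits in $\mathbb{R}^d$; (ii) the thresholds $\alpha_k+1/k$ and $\beta_k+1/k$ tacitly assume these quantities are not $-\infty$; since the $\beta_k$ increase, either all equal $-\infty$ (then use the threshold $-k$ instead) or they are eventually finite (then start the diagonalization there), so the same construction goes through. You handle the $+\infty$ case but should mention this side as well.
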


As a corollary to this theorem comes the proposition that will be used to prove convergence of $\them$,
the maximizer of $\llm$, to $\ths$ (see, also, \cite[Theorem 1.10]{Attouch1984}).
\begin{prop}\label{EpiConsistency}
Assume that $g_m$ epi-converges to $g$, $x_m \to x$ and $g_m(x_m) - \inf g_m \to 0$. Then $g(x) = \inf_y g(y) = \lim_{m \to \infty} g_m(x_m)$.
\end{prop}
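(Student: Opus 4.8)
The plan is to use the two characterizing inequalities of epi-convergence supplied by Theorem~\ref{thm:epi-conv}: the liminf inequality will pin $g(x)$ from below, while the limsup (recovery-sequence) inequality, combined with the near-optimality hypothesis $g_m(x_m)-\inf g_m\to 0$, will bound $\limsup_m g_m(x_m)$ from above. The conclusion then drops out of a squeeze.

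First I would establish the lower bound. Since $x_m \to x$ by hypothesis, the second (liminf) inequality of Theorem~\ref{thm:epi-conv}, which holds for \emph{every} sequence converging to $x$, applies to our particular sequence and gives at once
\[
g(x) \le \liminf_{m\to\infty} g_m(x_m).
\]
Next I would establish the upper bound. Fix an arbitrary point $y$. The first inequality of Theorem~\ref{thm:epi-conv} furnishes a recovery sequence $y_m \to y$ with $\limsup_m g_m(y_m) \le g(y)$. Writing $\delta_m := g_m(x_m)-\inf g_m$, so that $\delta_m \to 0$, and using $\inf g_m \le g_m(y_m)$, I get $g_m(x_m) \le g_m(y_m)+\delta_m$. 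Passing to the $\limsup$ and discarding the vanishing term $\delta_m$ yields $\limsup_m g_m(x_m) \le g(y)$; since $y$ was arbitrary, taking the infimum over $y$ gives $\limsup_m g_m(x_m) \le \inf_y g(y)$.

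Finally, chaining these with the trivial bound $\inf_y g(y)\le g(x)$ produces
\[
\inf_y g(y)\le g(x)\le \liminf_{m\to\infty} g_m(x_m)\le \limsup_{m\to\infty} g_m(x_m)\le \inf_y g(y),
\]
so every term coincides. In particular $\lim_{m\to\infty} g_m(x_m)$ exists and equals $g(x)=\inf_y g(y)$, as claimed.

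The argument is short, and I do not expect a genuine obstacle: both halves of the epi-convergence characterization are invoked exactly once, each for its natural purpose. The only points that merit care are the implicit finiteness of $\inf g_m$ (without which the hypothesis $g_m(x_m)-\inf g_m\to 0$ would be meaningless) and the elementary step that $\limsup$ of $g_m(y_m)+\delta_m$ is governed by $\limsup_m g_m(y_m)$ because $\delta_m\to 0$; neither of these is substantive.
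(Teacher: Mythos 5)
Your proof is correct, and it is built from exactly the same ingredients as the paper's: the two sequential inequalities of Theorem~\ref{thm:epi-conv} together with the hypothesis $g_m(x_m)-\inf g_m\to 0$. The organization, however, differs. The paper pivots on the quantity $\inf g_m$: it first combines a recovery sequence at $x$ with the liminf inequality along $x_m$ (where the hypothesis enters as $\liminf_m g_m(x_m)=\liminf_m \inf g_m$) to conclude $g(x)=\lim_m \inf g_m=\lim_m g_m(x_m)$, and then, in a second stage, produces a recovery sequence at an $\varepsilon$-near-minimizer $x_\varepsilon$ of $g$ to get $\lim_m \inf g_m\le \inf g+\varepsilon$. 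You instead squeeze $g_m(x_m)$ directly: the liminf inequality along $x_m$ gives $g(x)\le\liminf_m g_m(x_m)$; recovery sequences at an \emph{arbitrary} point $y$, combined with $g_m(x_m)\le g_m(y_m)+\delta_m$, give $\limsup_m g_m(x_m)\le g(y)$ and hence $\le\inf g$ after taking the infimum over $y$; the chain closes with the trivial $\inf g\le g(x)$. Your infimum over $y$ plays precisely the role of the paper's $\varepsilon$-argument, just phrased without the explicit $\varepsilon$, and you never need a recovery sequence at $x$ itself. What your route buys is economy: no intermediate claim about the convergence of $\inf g_m$ is required. What the paper's route buys is a slightly stronger by-product, namely that $\inf g_m$ itself converges to $\inf g$, which your argument recovers only indirectly (via the hypothesis) after the fact.
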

\begin{proof}
(We will use Theorem \ref{thm:epi-conv} many times.) Let $y_m$ be a sequence converging to $x$ and such that $\limsup_{m \to \infty} g_m(y_m) \le g(x)$ (such sequence $y_m$ exists). This implies that $\limsup_{m \to \infty} \inf g_m \le g(x)$.
On the other hand, $g(x) \le \liminf_{m \to \infty} g_m(x_m) =\liminf_{m \to \infty} \inf g_m$, {where the equality follows from the second assumption on $x_m$}.
Summarizing, $g(x) = \lim_{m \to \infty} \inf g_m = \lim_{m \to \infty} g_m(x_m)$. In particular, $\inf g \le \lim_{m \to \infty} \inf g_m$.

Take any $\ve > 0$ and let $x_\ve$ be such that $g(x_\ve) \le \inf g + \ve$. There exists a sequence $y_m$ converging to $x_\ve$ such that $g(x_\ve) \ge \limsup_{m \to \infty} g_m(y_m) \ge \limsup_{m \to \infty} \inf g_m$, hence $\lim_{m \to \infty} \inf g_m \le \inf g + \ve$. By arbitrariness of $\ve>0$ we obtain $\lim_{m \to \infty} \inf g_m \le \inf g$. This completes the proof.
\end{proof}

\begin{thm}\label{thm:hypoconv}
If Assumptions \ref{asJanka} and \ref{ContinuityTheta} are satisfied, then $\llm$ hypo-converges to $\ell$
almost surely.
\end{thm}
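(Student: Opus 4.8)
The plan is to verify hypo-convergence through the sequential characterisation of Theorem \ref{thm:epi-conv}. Unwinding the definition (apply that theorem to $-\llm$ and $-\llk$), hypo-convergence of $\llm$ to $\llk$ amounts to two statements holding at every $\theta\in\Theta$: (i) for \emph{every} sequence $\theta_m\to\theta$ one has $\limsup_m\llm(\theta_m)\le\llk(\theta)$, and (ii) for \emph{some} sequence $\theta_m\to\theta$ one has $\liminf_m\llm(\theta_m)\ge\llk(\theta)$. Since $\theta\mapsto\log f_\theta(\yo)$ is continuous (Assumption \ref{ContinuityTheta}), the term $\log f_{\theta_m}(\yo)\to\log f_\theta(\yo)$ along any $\theta_m\to\theta$, so both statements reduce to controlling $\mc_m$: (i) becomes $\liminf_m\mc_m(\theta_m)\ge\c(\theta)$ for every $\theta_m\to\theta$, and (ii) becomes $\limsup_m\mc_m(\theta_m)\le\c(\theta)$ for some $\theta_m\to\theta$. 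Two preliminary facts I would record first: $\c(\theta)<\infty$ (Cauchy--Schwarz against the density $h_\psi$, using Assumption \ref{asJanka}), and $\c$ is lower semicontinuous (Fatou's lemma together with Assumption \ref{ContinuityTheta}).

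For (i) I would localise and exploit the martingale structure. Fix an open ball $B$ and note that $\inf_{\theta'\in B}\mc_m(\theta')\ge \tfrac1m\sum_{j=1}^m\inf_{\theta'\in B}f_{\theta'}(Y_j)/h_{\psi_j}(Y_j)$. The crucial observation is that, because $Y_j\sim h_{\psi_j}$ and $\psi_j$ is $\ef_{j-1}$-measurable, the variables $\inf_{\theta'\in B}f_{\theta'}(Y_j)/h_{\psi_j}(Y_j)$ have the \emph{constant} conditional mean $\underline{\c}(B):=\int\inf_{\theta'\in B}f_{\theta'}(y)\,\d y$, and their conditional second moments are bounded by $\int f_\theta(y)^2/h_{\psi_j}(y)\,\d y\le M_\theta$ for any fixed $\theta\in B$ (the infimum is dominated by a single $f_\theta$), so Assumption \ref{asJanka} applies. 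The martingale SLLN (Theorem \ref{SLLN}) then gives $\tfrac1m\sum_j\inf_{\theta'\in B}f_{\theta'}(Y_j)/h_{\psi_j}(Y_j)\to\underline{\c}(B)$ a.s., whence $\liminf_m\inf_{\theta'\in B}\mc_m(\theta')\ge\underline{\c}(B)$. Running this over a countable base $\mathcal B$ of balls (so a single null set suffices) and letting $B\downarrow\{\theta\}$, monotone convergence gives $\underline{\c}(B)\uparrow\c(\theta)$ by continuity; since any $\theta_m\to\theta$ eventually lies in such a $B$, so that $\mc_m(\theta_m)\ge\inf_{\theta'\in B}\mc_m(\theta')$, statement (i) follows after taking the supremum over $B$.

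For (ii) I would construct a recovery sequence from a countable dense set $D\subseteq\Theta$. Off the single null set on which $\mc_m(d)\to\c(d)$ for all $d\in D$ simultaneously, lower semicontinuity of $\c$ lets me choose $d_k\in D$ with $d_k\to\theta$ and $\c(d_k)\to\c(\theta)$ (the lower limit of $\c$ over shrinking balls equals $\c(\theta)$). A diagonal argument --- holding $\theta_m\equiv d_k$ on long enough blocks of indices so that $\mc_m(d_k)\le\c(d_k)+1/k$ there --- produces $\theta_m\to\theta$ with $\limsup_m\mc_m(\theta_m)\le\limsup_k\c(d_k)=\c(\theta)$, which is (ii). Intersecting the countably many null sets from both parts yields a single event of full probability on which both (i) and (ii) hold for every $\theta$, establishing hypo-convergence a.s.

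The main obstacle is the asymmetry between the two bounds. The lower bound (i) works because an infimum of $f_{\theta'}$ over a ball is dominated by a single $f_\theta$, so Assumption \ref{asJanka} directly controls the relevant second moments and the SLLN applies; the analogous supremum has no such domination, and $\int(\sup_{\theta'\in B}f_{\theta'})^2/h_\psi$ need not be finite, so a symmetric ``sup-over-ball'' SLLN is unavailable. This forces the upper bound (ii) to be obtained indirectly, through pointwise convergence on the countable dense set and lower semicontinuity of $\c$, rather than by a uniform law of large numbers. The remaining care is the bookkeeping of null sets so that everything holds simultaneously for all $\theta$ off one exceptional event.
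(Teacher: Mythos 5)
Your reduction via Theorem \ref{thm:epi-conv} and your part (i) are sound; part (i) is, in sequential language, exactly the paper's Steps 1--2 (bound $\inf_{\varphi\in B}\mc_m(\varphi)$ below by an average of martingale differences with conditional mean $\underline{\c}(B)$ and conditional second moment controlled, via $\inf_{\varphi\in B}f_\varphi\le f_\theta$, by Assumption \ref{asJanka}; apply Theorem \ref{SLLN}; let $B$ shrink). The genuine gap is in part (ii). You assert that lower semicontinuity of $\c$ ``lets me choose $d_k\in D$ with $d_k\to\theta$ and $\c(d_k)\to\c(\theta)$.'' Lower semicontinuity gives the \emph{opposite} inequality, $\liminf_{d\to\theta}\c(d)\ge\c(\theta)$; what your construction needs is the upper-approachability property $\liminf_{D\ni d\to\theta}\c(d)\le\c(\theta)$, and this follows neither from LSC nor from Assumptions \ref{asJanka} and \ref{ContinuityTheta}. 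Concretely: take $\Y=(0,1)$ with Lebesgue measure, a single instrumental density $h\equiv 1$, $f_\theta\equiv 1$, and $f_{\theta'}=1+g_t$ with $t=\min(|\theta'-\theta|,1)$, where for $t>0$ the function $g_t$ is the triangular bump on $(0,t)$ of height $2/t$ (so $\int g_t\,\d y=1$) and $g_0\equiv 0$. Then $\theta'\mapsto f_{\theta'}(y)$ is continuous for every $y$ and $\int f_{\theta'}^2/h\,\d y<\infty$ for every $\theta'$, so both assumptions hold; yet $\c(\theta')=\c(\theta)+1$ for every $\theta'\ne\theta$. No sequence $d_k\ne\theta$ satisfies $\c(d_k)\to\c(\theta)$, and your diagonal argument only delivers $\limsup_m\mc_m(\theta_m)\le\c(\theta)+1$, which is not (ii). (The theorem itself survives this example: at $\theta$ the constant sequence works, since here $\mc_m(\theta)\equiv1=\c(\theta)$.)

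The paper's Step 3 takes precisely the route you rejected: it interchanges $\limsup$ and $\inf$ and then evaluates at points of $B$ --- in particular at $\theta$ itself, since $\inf_{\varphi\in B}\mc_m(\varphi)\le\mc_m(\theta)$ and $\mc_m(\theta)\to\c(\theta)$ a.s.\ by the pointwise martingale SLLN; in your language, the recovery sequence is the constant sequence $\theta_m\equiv\theta$. The price is a null set depending on $\theta$, a point the paper (following Geyer) passes over in silence, so your concern about a single null set is legitimate --- but lower semicontinuity is not the cure. A correct repair of your scheme: choose the countable set $D$ not merely dense in $\Theta$ but such that $\{(d,\c(d)):d\in D\}$ is dense in the \emph{graph} of $\c$ (possible, since the graph is a separable subset of $\Theta\times\Rl$). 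Then for every $\theta$ there exist $d_k\in D$ with $d_k\to\theta$ \emph{and} $\c(d_k)\to\c(\theta)$ by construction, and your block/diagonal argument goes through off the single null set on which $\mc_m(d)\to\c(d)$ for all $d\in D$. As written, however, part (ii) is false, so the proposal does not prove the theorem.
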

\begin{proof}
The proof is similar to the proof of Theorem 1 in \cite{Geyer1994}. We have to prove that $\mc_m$ epi-converges to
$\c$. Fix $\theta \in \Theta$.

Step 1: For any $B \in N(\theta)$, we have
\begin{equation}\label{eqn:epi-conv_step1}
\liminf_{m \to \infty} \inf_{\varphi \in B} \mc_m(\varphi) \ge \int \inf_{\varphi \in B} f_\varphi(y) \d y =: \underline{c}(B).
\end{equation}
Indeed,
\begin{align*}
\inf_{\varphi \in B} \mc_m(\phi)
&= \inf_{\varphi \in B} \frac{1}{m} \sum_{j=1}^m \frac{f_\varphi(Y_j)}{h_{\psi_j}(Y_j)}
\ge \frac{1}{m} \sum_{j=1}^m \inf_{\varphi \in B} \frac{f_\varphi(Y_j)}{h_{\psi_j}(Y_j)}\\
&=  \frac{1}{m} \sum_{j=1}^m \Big( \inf_{\varphi \in B} \frac{f_\varphi(Y_j)}{h_{\psi_j}(Y_j)} - \underline \c(B) \Big) + \underline \c(B).
\end{align*}
The sum is that of martingale differences, so assuming that {there is $M < \infty$ such that}
\begin{equation}\nonumber
{\sup_j \ee \bigg(\Big(\inf_{\varphi \in B} \frac{f_\varphi(Y_j)}{h_{\psi_j}(Y_j)} - \underline \c(B) \Big)^2 \bigg| \ef_{j-1}\bigg) \le M}
\end{equation}
the SLLN implies \eqref{eqn:epi-conv_step1}. We have the following estimates: 
\begin{align*}
&\ee \bigg(\Big(\inf_{\varphi \in B} \frac{f_\varphi(Y_j)}{h_{\psi_j}(Y_j)} - \underline \c(B) \Big)^2 \bigg| \ef_{j-1}\bigg)
=
\var \bigg(\inf_{\varphi \in B} \frac{f_\varphi(Y_j)}{h_{\psi_j}(Y_j)} \bigg| \ef_{j-1}\bigg)\\
&\le
\ee \bigg(\Big(\inf_{\varphi \in B} \frac{f_\varphi(Y_j)}{h_{\psi_j}(Y_j)} \Big)^2 \bigg| \ef_{j-1}\bigg)
\le
\ee \bigg(\Big(\frac{f_\theta(Y_j)}{h_{\psi_j}(Y_j)} \Big)^2 \bigg| \ef_{j-1}\bigg)
\le M_\theta,
\end{align*}
where the last inequality is by Assumption \ref{asJanka}.

Step 2: We shall prove that $\sup_{B \in N(\theta)} \liminf_{m \to \infty} \inf_{\varphi \in B} \mc_m(\phi) \ge \c(\theta)$.

The left-hand side is bounded from below by
$\sup_{B \in N(\theta)} \underline \c(B)$. Further, we have
\begin{equation}\nonumber
\sup_{B \in N(\theta)} \underline \c(B) \ge \lim_{\delta \downarrow 0} \underline \c(B(\theta, \delta))
= \int \lim_{\delta \downarrow 0} \inf_{\varphi \in B(\theta, \delta)} f_\varphi(y) \d y = \int f_\theta(y) \d y =
\c(\theta),
\end{equation}
where the first equality follows from the dominated convergence theorem (the dominator is $f_\theta$) and the last --
from the Assumption \ref{ContinuityTheta}.

Step 3: We have
\begin{equation}\nonumber
\sup_{B \in N(\theta)} \limsup_{m \to \infty} \inf_{\varphi \in B} \mc_m(\varphi)
\le \sup_{B \in N(\theta)} \inf_{\varphi \in B} \limsup_{m \to \infty} \mc_m(\varphi)
=  \sup_{B \in N(\theta)} \inf_{\varphi \in B} \c(\varphi) \le \c(\theta).
\end{equation}
Hence, $\sup_{B \in N(\theta)} \limsup_{m \to \infty} \inf_{\varphi \in B} \mc_m(\varphi) \le \c(\theta)$.
\end{proof}

{Note that almost sure convergence in the next Proposition corresponds to the randomness introduced by \texttt{AdapIS} and
$\yobs$ is fixed throughout this paper.}

\begin{prop}
If Assumptions \ref{asJanka} and \ref{ContinuityTheta} hold, $\ths$ is the unique maximizer of $\llk$ and
sequence $(\them)$ {(where $\them$ maximizes $\llm$)} is almost surely bounded then $\them\to \ths$ almost surely.
\end{prop}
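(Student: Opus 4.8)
The plan is to derive the result directly from the hypo-convergence of Theorem~\ref{thm:hypoconv} together with Proposition~\ref{EpiConsistency}, the only extra work being a subsequence argument to cope with the fact that $(\them)$ is assumed bounded but not a priori convergent. Since $\llm$ hypo-converges to $\llk$ almost surely, by definition $-\llm$ epi-converges to $-\llk$ almost surely. I would fix, from now on, a realization lying in the almost-sure event on which both this epi-convergence and the boundedness of $(\them)$ hold; recall that $\llk$ and its maximizer $\ths$ are deterministic since $\yobs$ is fixed. The remainder of the argument is then purely deterministic on this event.

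First I would recast the maximization in the language of Proposition~\ref{EpiConsistency}. Because $\them$ maximizes $\llm$, it minimizes $-\llm$, so $(-\llm)(\them) - \inf_\theta(-\llm(\theta)) = 0$ for every $m$; in particular this quantity tends to $0$, which is exactly the hypothesis $g_m(x_m) - \inf g_m \to 0$ of Proposition~\ref{EpiConsistency} applied with $g_m = -\llm$ and $x_m = \them$. Thus every ingredient of that proposition is in place except the requirement that $x_m \to x$ for a genuine limit point $x$.

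To supply this, I would pass to subsequences. By boundedness and the Bolzano--Weierstrass theorem, any subsequence of $(\them)$ contains a further subsequence $\hat\theta_{m_k} \to \bar\theta$ converging to some $\bar\theta$. The key observation is that epi-convergence is inherited along subsequences: for the recovery (limsup) condition of Theorem~\ref{thm:epi-conv} one simply restricts, to the indices $m_k$, the approximating sequence supplied for the full sequence, using that a limsup along a subsequence is no larger than along the whole sequence; for the lower (liminf) condition one extends an arbitrary sequence indexed by $m_k$ to a full sequence converging to the same point, using that a liminf along a subsequence is no smaller than along the whole sequence. Hence $-\ell_{m_k}$ still epi-converges to $-\llk$, and Proposition~\ref{EpiConsistency} applies along this subsequence (the difference-to-infimum is identically zero, as above), yielding $-\llk(\bar\theta) = \inf_\theta(-\llk(\theta))$, i.e.\ $\bar\theta$ maximizes $\llk$.

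Finally, by the assumed uniqueness of the maximizer, $\bar\theta = \ths$. Since every subsequence of the bounded sequence $(\them)$ therefore admits a further subsequence converging to the \emph{same} limit $\ths$, the whole sequence must converge, giving $\them \to \ths$ almost surely. I expect the main obstacle to be the bookkeeping in this subsequence step---specifically, verifying cleanly that epi-convergence descends to subsequences and that the hypotheses of Proposition~\ref{EpiConsistency} remain valid along them---rather than any deeper analytic difficulty, since the hypo-convergence in Theorem~\ref{thm:hypoconv} already does the heavy lifting.
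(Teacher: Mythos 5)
Your proposal is correct and follows essentially the same route as the paper's own proof: a.s.\ hypo-convergence (Theorem~\ref{thm:hypoconv}) plus Proposition~\ref{EpiConsistency}, then uniqueness of $\ths$ and boundedness of $(\them)$ to upgrade subsequential convergence to convergence of the whole sequence. The only difference is that you make explicit the bookkeeping the paper leaves implicit (Bolzano--Weierstrass and the fact that epi-convergence is inherited along subsequences), and your verification of that inheritance via the limsup/liminf characterization is sound.
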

\begin{proof} As we have already mentioned, by SLLN for martingales, $\llm(\th)\to \llk(\th)$, pointwise.
Hypo-convergence of $\llm$ to $\ell$ implies, by Proposition \ref{EpiConsistency}, that the maximizers of $\llm$
have accumulation points that are the  maximizers of $\ell$. If $\ell$ has a unique maximizer $\ths$ then any
convergent subsequence of $\them$, maximizers of $\llm$, converges to $\ths$.
The conclusion follows immediately. 
\end{proof}

Of course, it is not easy to show boundedness of $\them$ in concrete examples. In the next section
we will prove consistency of $\them$ in models where log-likelihoods and their estimates are concave.

\subsection{Central Limit Theorem for Adaptive Importance Sampling}

Let $\them$ be a maximizer of $\llm$, i.e.\ the AIS estimate of the likelihood given by \eqref{MCloglik} with \eqref{AISestim}.
We assume that $\ths$ is a unique maximizer of $\ell$.
For asymptotic normality of $\them$, we will need the following assumptions.

\begin{ass}\label{Derivatives}
 First and second order derivatives of $f_\th$ with respect to $\th$
(denoted by $\grad f_\th$ and $\hess f_\th$) exist in a neighbourhood of $\ths$ and we have
\begin{equation}\nonumber
 \grad \c(\theta)=\int \grad f_\theta(y)\d y,\qquad \hess\c(\theta)=\int \hess f_\theta(y)\d y.
\end{equation}
\end{ass}

\begin{ass}\label{ConsistencySqrt}
 $\them=\ths+O_{\rm p}(1/\sqrt{m})$.
\end{ass}

\begin{ass}\label{Dpositive}
 Matrix $D=\hess \ell(\ths)$ is negative definite.
\end{ass}

\begin{ass}\label{ContinuityPsi}
For every $y$, function $\psi\mapsto h_\psi(y)$ is continuous and $h_\psi(y)>0$.
\end{ass}

\begin{ass}\label{Diminishing} For some $\psi_\star$ we have
 $\psi_m\to \psi_\star$ almost surely.
\end{ass}

\begin{ass}\label{Momentdiff} There exists a nonnegative function $g$ such that $\int g(y)\d y<\infty$ and the inequalities
\begin{equation}\nonumber
\begin{split}
&\sup_{\psi} \frac{f_\ths(y)^{2+\alpha}}{h_{\psi}(y)^{1+\alpha}} \leq g(y),  \quad
 \sup_{\psi} \frac{|\grad f_\ths(y)|^{2+\alpha}}{h_{\psi}(y)^{1+\alpha}} \leq g(y),\\
 &\sup_{\psi} \frac{\Vert \hess f_\ths(y)\Vert^{1+\alpha}}{h_{\psi}(y)^{\alpha}} \leq g(y)
\end{split}
\end{equation}
are fulfilled for some $\alpha>0$ and also for $\alpha=0$.
 \end{ass}

 \begin{ass}\label{ASE} Functions $\hess\llm(\th)$ are asymptotically stochastically  equicontionuous at $\ths$, i.e.\
for every $\eps>0$ there exists $\delta>0$ such that
\begin{equation}\nonumber
 \limsup_{m\to\infty} \Pr\left( \sup_{\vert \th-\ths\vert \leq \delta} \Vert \hess \llm(\th)-\hess \llm(\ths) \Vert >\eps\right)=0.
\end{equation}
 \end{ass}
Let us begin with some comments on these assumptions and note simple facts which follow from them.
Assumption \ref{Derivatives} is a standard regularity condition.
It implies that a martingale property similar to \eqref{MG0} holds also for the gradients and hessians:
\begin{equation}\label{MG12}
\Ex\left( \frac{\grad f_\theta(Y_{m+1})}{h_{\psi_{m+1}}(Y_{m+1})}\Big|\ef_{m}\right)=\grad c(\th),\quad
\Ex\left( \frac{\hess f_\theta(Y_{m+1})}{h_{\psi_{m+1}}(Y_{m+1})}\Big|\ef_{m}\right)=\hess c(\th).
\end{equation}
Assumption \ref{ConsistencySqrt} stipulates square root consistency of $\them$. It is automatically
fulfilled if $\llm$ is concave, in particular for exponential families.
Assumption \ref{Diminishing} combined with \ref{ContinuityPsi} is a ``diminishing adaptation'' condition. It may be ensured by an appropriately
specifying step 4 of \texttt{AdapIS}.
The next assumptions are not easy to verify in general, but they are satisfied
for exponential families on finite spaces, in particular for our ``motivating example'': autologistic model.
Let us also note that our Assumption \ref{ASE} plays a similar role to Assumption (f) in \cite[Thm. 7]{Geyer1994}.

Assumption \ref{Momentdiff} together with \eqref{MG0} and \eqref{MG12} allows us to apply SLLN for martingales
in a form given in Theorem \ref{SLLN}, Appendix \ref{Martingales}. Indeed,
${f_\ths(Y_{m})}/{h_{\psi_{m}}(Y_{m})}-c(\ths)$, ${\grad f_\ths(Y_{m})}/{h_{\psi_{m}}(Y_{m})}-\grad c(\ths)$
and ${\hess f_\ths(Y_{m})}/{h_{\psi_{m}}(Y_{m})}-\hess c(\ths)$
are MGDs with bounded moments of order $1+\alpha>1$. It follows that, almost surely,
 \begin{equation}\label{Consistency}
  \mc_m(\ths)\to \c(\ths),\qquad \grad \mc_m(\ths)\to \grad \c(\ths),\qquad \hess\mc_m(\ths)\to  \hess\c(\ths).
 \end{equation}

Now we are in a position to state the main result of this section.

\begin{thm}\label{AsNormAIS}
If Assumptions 
\ref{Derivatives}-\ref{ASE} hold
then
\begin{equation}\nonumber
\sqrt{m}\left(\them -\ths\right) \to \mathcal{N} (0,D^{-1}VD^{-1}) \quad \text{in distribution},
 \end{equation}
where $D=\hess\ell(\ths)$ and
 \begin{equation}\nonumber
  V = \frac{1}{\c(\ths)^2} \VAR_{Y \sim h_{\psis}} \left[\frac{\grad f_{\ths}(Y)}{h_{\psis}(Y)}-\frac{\grad c(\ths)}{c(\ths)}\dfrac{ f_{\ths}(Y)}{h_{\psis}(Y)}\right],
 \end{equation}
{where $\psis$ is defined in Assumption \ref{Diminishing}.}
\end{thm}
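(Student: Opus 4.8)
The plan is to run the classical delta-method argument for $M$-estimators, with the score term handled by a martingale central limit theorem rather than an i.i.d.\ one. Since $\them$ maximizes $\llm$ and $\ths$ lies in the interior of $\Theta$, the first-order condition gives $\grad\llm(\them)=0$. Assumption \ref{Derivatives} makes $\llm$ twice differentiable near $\ths$, so a (coordinatewise) mean-value expansion of $\grad\llm$ about $\ths$ yields
\[
0=\grad\llm(\them)=\grad\llm(\ths)+\hess\llm(\tilde\th_m)\,(\them-\ths),
\]
where the intermediate points lie on the segment joining $\them$ and $\ths$ (collected into one matrix $\hess\llm(\tilde\th_m)$). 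Hence
\[
\sqrt m\,(\them-\ths)=-\big[\hess\llm(\tilde\th_m)\big]^{-1}\sqrt m\,\grad\llm(\ths),
\]
and it suffices to prove $\hess\llm(\tilde\th_m)\topr D$ together with $\sqrt m\,\grad\llm(\ths)\tod\mathcal N(0,V)$; Slutsky's theorem and the symmetry $D\t=D$ then deliver $\mathcal N(0,D^{-1}VD^{-1})$.

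For the score, I would exploit $\grad\ell(\ths)=0$, i.e.\ $\grad f_{\ths}(\yo)/f_{\ths}(\yo)=\grad\c(\ths)/\c(\ths)$, to subtract off the fixed quantity and obtain, after clearing denominators,
\[
\sqrt m\,\grad\llm(\ths)=\sqrt m\Big(\frac{\grad\c(\ths)}{\c(\ths)}-\frac{\grad\mc_m(\ths)}{\mc_m(\ths)}\Big)=\frac{-1}{\mc_m(\ths)}\,\frac1{\sqrt m}\sum_{j=1}^m\xi_j,\qquad
\xi_j=\frac{\grad f_{\ths}(Y_j)}{h_{\psi_j}(Y_j)}-\frac{\grad\c(\ths)}{\c(\ths)}\,\frac{f_{\ths}(Y_j)}{h_{\psi_j}(Y_j)}.
\]
By \eqref{MG0} and \eqref{MG12}, $\Ex(\xi_j\mid\ef_{j-1})=\grad\c(\ths)-\grad\c(\ths)=0$, so $(\xi_j)$ is a vector martingale-difference sequence. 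Since $\mc_m(\ths)\toas\c(\ths)$ by \eqref{Consistency}, Slutsky reduces everything to showing $\frac1{\sqrt m}\sum_{j=1}^m\xi_j\tod\mathcal N(0,\Sigma)$ with $\Sigma=\c(\ths)^2V=\VAR_{Y\sim h_{\psis}}\big[\grad f_{\ths}(Y)/h_{\psis}(Y)-(\grad\c(\ths)/\c(\ths))f_{\ths}(Y)/h_{\psis}(Y)\big]$, which I would establish with a martingale CLT (of the type in Appendix \ref{Martingales}).

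Verifying the CLT hypotheses is the step I expect to be the main obstacle, because it is exactly where the adaptive, non-i.i.d.\ structure enters and where Assumption \ref{Diminishing} becomes indispensable. First, the predictable quadratic variation must converge. Because $Y_j\sim h_{\psi_j}$ with $\psi_j$ being $\ef_{j-1}$-measurable, $\Ex(\xi_j\xi_j\t\mid\ef_{j-1})=\Sigma(\psi_j)$, where $\Sigma(\psi):=\VAR_{Y\sim h_\psi}\big[\grad f_{\ths}(Y)/h_\psi(Y)-(\grad\c(\ths)/\c(\ths))f_{\ths}(Y)/h_\psi(Y)\big]$. I would show $\psi\mapsto\Sigma(\psi)$ is continuous at $\psis$ by dominated convergence, using the continuity of $\psi\mapsto h_\psi(y)$ (Assumption \ref{ContinuityPsi}) and the $\alpha=0$ bounds of Assumption \ref{Momentdiff} as integrable dominators for the integrands $\grad f_{\ths}\grad f_{\ths}\t/h_\psi$, $f_{\ths}\grad f_{\ths}\t/h_\psi$ and $f_{\ths}^2/h_\psi$. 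Then $\psi_j\to\psis$ a.s.\ gives $\Sigma(\psi_j)\toas\Sigma(\psis)=\Sigma$, and a Ces\`aro argument yields $\frac1m\sum_{j=1}^m\Ex(\xi_j\xi_j\t\mid\ef_{j-1})\toas\Sigma$. Second, the $\alpha>0$ bounds of Assumption \ref{Momentdiff} give $\sup_j\Ex(|\xi_j|^{2+\alpha}\mid\ef_{j-1})<\infty$ a.s., so the normalized sums satisfy a conditional Lyapunov bound of order $m^{-\alpha/2}$, implying the conditional Lindeberg condition $\frac1m\sum_j\Ex\big(|\xi_j|^2\Ind(|\xi_j|>\eps\sqrt m)\mid\ef_{j-1}\big)\topr0$. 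The martingale CLT then gives $\frac1{\sqrt m}\sum_j\xi_j\tod\mathcal N(0,\Sigma)$, hence $\sqrt m\,\grad\llm(\ths)\tod\mathcal N(0,V)$.

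It remains to treat the Hessian, which is more routine. I would split $\hess\llm(\tilde\th_m)-D=[\hess\llm(\tilde\th_m)-\hess\llm(\ths)]+[\hess\llm(\ths)-D]$. Assumption \ref{ConsistencySqrt} gives $\them\topr\ths$ and thus $\tilde\th_m\topr\ths$, so for any fixed $\delta$ the event $\{|\tilde\th_m-\ths|\le\delta\}$ has probability tending to one; combined with the asymptotic stochastic equicontinuity of Assumption \ref{ASE}, the first bracket tends to $0$ in probability. For the second bracket, writing $\hess\llm(\ths)$ explicitly in terms of $\mc_m(\ths)$, $\grad\mc_m(\ths)$, $\hess\mc_m(\ths)$ and the fixed observed quantities, the a.s.\ limits in \eqref{Consistency} give $\hess\llm(\ths)\toas\hess\ell(\ths)=D$. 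Since $D$ is negative definite (Assumption \ref{Dpositive}) it is invertible, so continuity of matrix inversion yields $[\hess\llm(\tilde\th_m)]^{-1}\topr D^{-1}$. Applying Slutsky's theorem to $\sqrt m\,(\them-\ths)=-[\hess\llm(\tilde\th_m)]^{-1}\sqrt m\,\grad\llm(\ths)$ and using $D\t=D$ then gives $\sqrt m\,(\them-\ths)\tod -D^{-1}\mathcal N(0,V)=\mathcal N(0,D^{-1}VD^{-1})$, completing the argument.
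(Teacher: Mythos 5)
Your proposal is correct, and its probabilistic core coincides with the paper's: both prove $\sqrt m\,\grad\llm(\ths)\tod\mathcal N(0,V)$ using exactly the same martingale differences $\xi_j$, the same convergence $\Ex(\xi_j\xi_j\t\mid\ef_{j-1})\toas \c(\ths)^2V$ (dominated convergence from Assumptions \ref{ContinuityPsi}, \ref{Diminishing} and the $\alpha=0$ case of Assumption \ref{Momentdiff}), the same Lyapunov bound from the $\alpha>0$ case, and the same Slutsky step with $\mc_m(\ths)\toas\c(\ths)$. Where you genuinely differ is the outer argument. The paper verifies the two conditions of \cite[Theorem VII.5]{Pollard1984}: the score CLT \eqref{as_norm1} and the uniform stochastic quadratic expansion \eqref{sup1} of $\llm$ over balls of radius $M/\sqrt m$, the latter via a Taylor expansion of $\llm$ controlled by Assumption \ref{ASE} and \eqref{Consistency}, and then uses the $\sqrt m$-consistency of Assumption \ref{ConsistencySqrt}. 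You instead run the classical sandwich argument: first-order condition, coordinatewise mean-value expansion of $\grad\llm$, inversion of the intermediate Hessian, Slutsky --- with Assumption \ref{ASE} and \eqref{Consistency} entering exactly as in the paper, only applied to $\hess\llm(\tilde\th_m)$ rather than to the remainder in \eqref{sup1}. Your route is more self-contained (no external M-estimation theorem) and needs only plain consistency $\them\topr\ths$ from Assumption \ref{ConsistencySqrt}, since the sandwich identity recovers the rate from tightness of the score; the paper's route, conversely, never requires $\them$ to be a stationary point. Two points you should make explicit: the identities $\grad\llm(\them)=0$ and the mean-value expansion are valid only on the event --- of probability tending to one by consistency --- that $\them$ and the whole segment to $\ths$ lie in the neighbourhood where Assumption \ref{Derivatives} provides derivatives (interiority of $\ths$ in $\Theta$ alone is not enough); and the coordinatewise mean-value theorem yields a matrix whose rows are rows of $\hess\llm$ evaluated at \emph{different} intermediate points, so the convergence to $D$ must be, and by your argument is, checked row by row.
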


\begin{proof}
It is well-known (see
\cite[Theorem VII.5]{Pollard1984})
that we need to prove
\begin{equation}\label{as_norm1}
\sqrt{m}\grad \llm(\ths)\tod \mathcal{N} (0,V)
\end{equation}
and that for every $M>0$, the following holds:
\begin{equation}\label{sup1}
\begin{split}
\sup_{|\th-\ths|\leq {M}/{\sqrt{m}}} & m\Big|\llm(\th)-\llm(\ths)\\
       &- (\th-\ths)\t \grad\llm(\ths)-\frac{1}{2}(\th-\ths)\t D(\th-\ths)\Big| \topr 0.
\end{split}
\end{equation}

First we show \eqref{as_norm1}. Since $\grad \llm(\th)=\grad f_\th(\yo)/ f_\th(\yo) -\grad\mc_m(\th)/\mc_m(\th)$ and
 $\grad \ell(\ths)=\grad f_\ths(\yo)/f_\ths(\yo) -\grad\c(\ths)/\c(\ths)=0$, we obtain that
\begin{equation}\label{NumDenom}
\grad \llm(\ths) = \frac{\grad\c(\ths)}{\c(\ths)}-\frac{\grad\mc_m(\ths)}{\mc_m(\ths)}=
\frac{\dfrac{\grad\c(\ths)}{\c(\ths)} \mc_m(\ths)-\grad\mc_m(\ths)}{ \mc_m(\ths) }.
\end{equation}
The denominator in the above expression converges to $\c(\ths)$ in probability, by \eqref{Consistency}. In view of Slutski's
theorem, to prove \eqref{as_norm1} it is enough to show asymptotic normality of the numerator.
We can write
\begin{equation}\nonumber
 \dfrac{\grad\c(\ths)}{\c(\ths)} \mc_m(\ths)-\grad\mc_m(\ths)=-\frac{1}{m}\sum_{j=1}^m \xi_j,
\end{equation}
where we use the notation
\begin{equation}\nonumber
\xi_j= \dfrac{\grad f_{\ths}(Y_j)}{h_{\psi_j}(Y_j)}-
                        \dfrac{\grad c(\ths)}{c(\ths)} \dfrac{f_{\ths}(Y_j)}{h_{\psi_j}(Y_j)}.
\end{equation}
Now note that $\xi_j$ are martingale differences by \eqref{MG0} and \eqref{MG12}. Moreover,
\begin{equation}\nonumber
\begin{split}
\Ex\Big( \xi_j \xi_j^T | \ef_{j-1} \Big)=\int &\left(\dfrac{\grad f_{\ths}(y)}{h_{\psi_j}(y)}-
                        \dfrac{\grad c(\ths)}{c(\ths)} \dfrac{f_{\ths}(y)}{h_{\psi_j}(y)}\right)\\
                     &\left(\dfrac{\grad f_{\ths}(y)}{h_{\psi_j}(y)}-
                        \dfrac{\grad c(\ths)}{c(\ths)} \dfrac{f_{\ths}(y)}{h_{\psi_j}(y)}\right)\t h_{\psi_j}(y)\d y,
\end{split}
\end{equation}
so Assumptions \ref{ContinuityPsi} and \ref{Diminishing} via dominated convergence and Assumption \ref{Momentdiff} (with $\alpha=0$ in the exponent) entail
\begin{equation}\nonumber
\Ex\Big( \xi_j \xi_j^T | \ef_{j-1} \Big)\toas \c(\ths)^2V.
\end{equation}
Now we use Assumption \ref{Momentdiff} (with $\alpha>0$ in the exponent) to infer the Lyapunov-type condition
\begin{equation}\nonumber
\Ex\Big( |\xi_j|^{2+\alpha} | \ef_{j-1} \Big)\leq \const\cdot \int g(y)\d y<\infty.
\end{equation}
The last two displayed formulas are sufficient for a martingale CLT (Theorem \ref{CLT}, Appendix \ref{Martingales}). We conclude that
\begin{equation}\nonumber
\frac{1}{\sqrt{m}} \sum_{j=1}^m \xi_j\xi_j\t \tod \mathcal{N} (0,\c(\ths)^2 V),
\end{equation}
hence the proof of \eqref{as_norm1} is complete.

Now we proceed to a proof of \eqref{sup1}. By Taylor expansion,
\begin{equation}\nonumber
 \llm(\th)-\llm(\ths)- (\th-\ths)\t \grad\llm(\ths)=\frac{1}{2}(\th-\ths)\t \hess\llm(\tilde \th)(\th-\ths)
\end{equation}
for some $\tilde\th\in[\th,\ths]$. Consequently, the LHS of \eqref{sup1} is
\begin{equation}\nonumber
\begin{split}
 &\leq \sup_{\substack{|\th-\ths|\leq {M}/\sqrt{m}\\ \tilde\th\in[\th,\ths]}}
              m \left|\frac{1}{2}(\th-\ths)\t\left( \hess\llm(\tilde\th)- \hess\ell(\ths)\right)(\th-\ths)\right|\\
 &\leq \frac{M^2}{2} \: \sup_{|\th-\ths|\leq {M}/\sqrt{m}}
               \left\Vert  \hess\llm(\th)- \hess\ell(\ths)\right\Vert\\
 &\leq \frac{M^2}{2} \: \sup_{|\th-\ths|\leq {M}/\sqrt{m}}
               \left\Vert \hess\llm(\th)- \hess\llm(\ths)\right\Vert
                  + \frac{M^2}{2} \: \left\Vert\hess\llm(\ths)-\hess\ell(\ths)\right\Vert.\\
\end{split}
\end{equation}
The first term above goes to zero in probability by Assumption \ref{ASE}. The second term also goes to zero because
\begin{equation}\nonumber
\begin{split}
 &\hess \llm(\ths)-\hess\ell(\ths)=\hess \log\c(\ths)-\hess\log \mc_m(\ths)\\
	&=\frac{\hess\c(\ths)}{\c(\ths)}-\frac{\grad\c(\ths)}{\c(\ths)}\frac{\grad\c(\ths)\t}{\c(\ths)}
           - \frac{\hess\mc_m(\ths)}{\mc_m(\ths)}+\frac{\grad\mc_m(\ths)}{\mc_m(\ths)}\frac{\grad\mc_m(\ths)\t}{\mc_m(\ths)}\topr 0,\\
\end{split}
\end{equation}
in view of \eqref{Consistency}. Therefore \eqref{sup1} holds and the proof is complete.
\end{proof}
\goodbreak

\subsection{Optimal importance distribution}

\newcommand\ag{\eta}

We advocate adaptation to improve the choice of instrumental distribution $h$. But which $h$ is the best?
If we use (non-adaptive) importance sampling with instrumental distribution $h$ then the maximizer $\them$ of
the MC likelihood approximation has asymptotic normal distribution, namely
$\sqrt{m}\left(\them -\ths\right) \tod \mathcal{N} (0,D^{-1}VD^{-1})$, ($m \rightarrow \infty$) with
 \begin{equation}\nonumber
  V = \frac{1}{\c(\ths)^2} \VAR_{Y \sim h}
\left[\frac{\grad f_{\ths}(Y)}{h(Y)}-\frac{\grad c(\ths)}{c(\ths)}\dfrac{ f_{\ths}(Y)}{h(Y)}\right].
 \end{equation}
This fact is well-known \cite{Geyer1994} and  is a special case of Theorem \ref{AsNormAIS}.
Since the asymptotic distribution is multidimensional its dispersion can be measured in various ways, e.g.,  though the determinant,
the maximum eigenvalue or the trace of the covariance matrix.
We examine the trace which equals to the asymptotic mean square error of the MCML approximation (the asymptotic bias is nil).
Notice that
\begin{equation}\nonumber
 \c(\ths)^2 V =\VAR_{Y \sim h}  \frac{\ag(Y)}{h(Y)} =\Ex_{Y \sim h}  \frac{ \ag(Y) \ag(Y)\t}{h(Y)^2}=\int \frac{ \ag(y) \ag(y)\t }{h(y)}\d y,
\end{equation}
where
\begin{equation}\nonumber
\ag(y)= \grad f_{\ths}(y)-\frac{\grad c(\ths)}{c(\ths)}{ f_{\ths}(y)}.
\end{equation}
Since $\tr \left[D^{-1} \ag(y) \ag(y)\t D^{-1}\right]= (D^{-1} \ag(y))\t D^{-1} \ag(y)=\vert D^{-1} \ag(y)\vert^2$,
the minimization of $\tr(D^{-1} V D^{-1})$ is equivalent to
\begin{equation}\nonumber
\int \frac{\vert D^{-1} \ag(y)\vert^2}{h(y)}\d y\to \min,
\end{equation}
subject to $h\geq 0$ and $\int h=1$. By Schwarz inequality we have 
\begin{equation}\nonumber
\begin{split}
 \int \frac{\vert D^{-1} \ag(y)\vert^2}{h(y)}\d y&= \int \left(\frac{\vert D^{-1} \ag(y)\vert}{\sqrt{h(y)}}\right)^2\d y
                                           \int \left(\sqrt{h(y)}\right)^2\d y\\
                                        &\geq \left( \int \frac{\vert D^{-1} \ag(y)\vert}{\sqrt{h(y)}} \sqrt{h(y)}\d y\right)^{2}
                                        = \left( \int \vert D^{-1} \eta(y)\vert\d y\right)^{2},
\end{split}
\end{equation}
with equality only for $h(y)\propto |D^{-1} \ag(y)|$.  The optimum choice of $h$ is therefore
\begin{equation}\label{Optimal}
\hopt(y)\propto \left|D^{-1} \left( \grad f_\ths(y)-\dfrac{\grad \c(\ths)}{\c(\ths)} f_\ths(y)\right) \right|.
\end{equation}
Unfortunately, this optimality result is chiefly of theoretical importance, because it is not clear
how to sample from $\hopt$ and how to compute the norming constant for this distribution. This might well
be even more difficult than computing $\c(\th)$.

The following example shows some intuitive meaning of \eqref{Optimal}. It is a purely ``toy example'' because
the simple analitical formulas exist for $\c(\th)$ and $\ths$ while MC is considered only for illustration.

\begin{exam}
Consider a binomial distribution on $\Y=\{0,1,\ldots,n\}$ given by $\pi_\th(y)=\binom{n}{y}p^y(1-p)^{n-y}$.
Parametrize the model with the log-odds-ratio $\th=\log p/(1-p)$, absorb the $\binom{n}{y}$ factor into
the measure $\d y$ to get the standard exponenial family form with
\begin{equation}\nonumber
 f_\th(y)=\e^{\th y} \quad\text{and}\quad \c(\th)=\sum_{y=0}^n \binom{n}{y} \e^{\th y}=(1+\e^\th)^n.
\end{equation}
Taking into account the facts that ${\grad \c(\ths)}/{\c(\ths)}=\yo$ and $\grad f_\th(y)=y\e^{\th y}$ we obtain
that \eqref{Optimal} becomes $\hopt(y)\propto|y-\yo|\e^{\th y}$ (factor $D^{-1}$ is a scalar so can be omitted). In other words, the optimum instrumental distribution
for AIS MCML, expressed in terms of $p=\e^{\th}/(1+\e^{\th})$ is
\begin{equation}\nonumber
 \Pr_{Y\sim \hopt}(Y=y)\propto  \binom{n}{y} | y-\yo| p^y(1-p)^{n-y}.
\end{equation}
\end{exam}

\section{Generalized adaptive scheme}\label{Generalized}

Importance sampling, even in its adaptive version (AIS), suffers from the degeneracy of weights.
To compute the importance weights ${f_\theta(Y_m)}/{h_{\psi_m}(Y_m)}$
we have to know norming constants for every $h_{\psi_m}$
(or at least their ratios). This requirement severly restricts our choice of the
family of instrumental densities $h_\psi$. Available instrumental densities are far from
$\hopt$ and far from $f_\th/\c(\th)$.  Consequently the weights tend to degenerate (most of them are practically zero, while
a few are very large). This effectively makes AIS in its basic form
impractical. To obtain  practically applicable algorithms, we can generalize AIS as follows.
In the same situation as in Section \ref{AIS}, instead of the AIS estimate given by \eqref{AISestim}, we
consider a more general Monte Carlo estimate of $\c(\th)$ of the form
\begin{equation}\label{MeanMCestim}
 \mc_m(\theta) = \frac{1}{m} \sum_{j=1}^m \ic(\th,\psi_j),
\end{equation}
where the summands $\ic(\th,\psi_j)$ are computed by an MC method to be specified later. For now let us just assume
that this method depends on a control parameter $\psi$ which may change at each step.
A general adaptive algorithm is the following:
\goodbreak

\begin{center}
 \tt  Algorithm AdapMCML
\end{center}
\begin{enumerate}\setlength{\parskip}{0pt}\tt
\item Set an initial value of $\psi_1$, $m=1$, $\mc_0(\theta) \equiv 0$.
\item Compute an ``incremental estimate'' $\ic(\th,\psi_m)$.
\item Update the approximation of $\mc_m(\theta)$:
\begin{equation}\nonumber
 \mc_{m}(\theta) = \frac{m-1}{m} \mc_{m-1}(\theta) + \frac{1}{m} \ic(\th,\psi_m).
\end{equation}
\item Update $\psi$: choose $\psi_{m+1}$ based on the history of the simulation.
\item $m=m+1$; go to 2.
\end{enumerate}
\texttt{AdapIS} in Section \ref{AIS} is a special case of \texttt{AdapMCML} which is obtained by letting
$\ic(\theta,\psi_m) = {f_\theta(Y_m)}/{h_{\psi_m}(Y_m)}$.
\goodbreak

\subsection{Variance reduction via resampling and MCMC}

The key property of the AIS exploited in Section \ref{AIS}
is the martingale structure implied by \eqref{MG0} and \eqref{MG12}.
The main asymptotic results generalize if
\textit{given $\psi$, the estimates of $\c(\th)$ and its derivatives
are conditionally unbiased}.
We propose an algorithm for computing $\ic$ in \eqref{MeanMCestim} which has the unbiasedness
property and is more efficient than simple AIS. To some extent it is a remedy for the problem of weight degeneracy
{and reduces the variance of Monte Carlo approximations}.
As before, consider a family of ``instrumental densities'' $h_\psi$. Assume they are
properly normalized ($\int h_\psi=1$) and the control parameter $\psi$ belongs
the same space as the parameter of interest $\th$  ($\Psi=\Theta$).  Further assume that for every $\psi$ we have at our disposal
a Markov kernel $\ker_\psi$ on $\Y$ which preserves distribution $\pi_\psi=f_\psi/c(\psi)$, i.e.\
$ f_\psi(y)\d y=\int f_\psi(y\p)\ker_\psi(y\p,\d y)\d y\p$. Let us fix $\psi.$
This is a setup in which we can apply the following importance sampling-\-resampling algorithm \texttt{ISReMC}:
\goodbreak

\begin{center}
 \tt  Algorithm ISReMC
\end{center}
\begin{enumerate}\setlength{\parskip}{0pt}\tt
\item Sample $Y_1,\ldots,Y_l\sim h_\psi$.
\item Compute the importance weights $\W_i=\w(Y_i)=\dfrac{f_\psi(Y_i)}{h_\psi(Y_i)}$ and put $\W_\kr=\sum_{i=1}^l\W_i$.
\item Sample $\Ys_1,\ldots,\Ys_r\sim\sum_{i=1}^l \delta_{Y_i}(\cdot)\W_i/\W_\kr$ [Discrete distribution with mass
$\W_i/\W_\kr$ at point
 $Y_i$].
\item  For $k=1,\ldots,r$ generate a Markov chain trajectory, starting from $\Ys_k$ and using kernel $\ker_\psi$:
\begin{equation}\nonumber
 \Ys_k=Y_k^0, Y_k^1,\ldots,Y_k^s,Y_k^{s+1},\ldots, Y_k^{s+n}.
\end{equation}
\item[] Compute $\ic(\th,\psi)$ given by
\begin{equation}\label{Increm}
  \ic(\th,\psi) =
\dfrac{\W_\kr}{l}\frac{1}{r}\sum_{k=1}^r \frac{1}{n}\sum_{u=s+1}^{s+n} \frac{f_\th(Y^{u}_k)}{f_\psi(Y^{u}_k)}.
\end{equation}
\end{enumerate}
\goodbreak

This algorithm combines the idea of resampling (borrowed from sequential MC; steps 2 and 3)
with computing ergodic averages in multistart MCMC
(step 4; notice that $s$ is a burn-in and $n$ is the actual used sample size for a single MCMC run, repeated  $r$ times).
More details about \texttt{ISReMC} are in
\cite{ISREMC}. In our context it is sufficient to note the following key property of this algorithm.
\begin{lemma}\label{lem:isremc} If $\ic(\th,\psi)$ is the output of \texttt{IReMC} then for every $\th$ and every $\psi$,
 \begin{equation}\nonumber
 \Ex \ic(\th,\psi)=\c(\th).
\end{equation}
If Assumption \ref{Derivatives} holds then also
\begin{equation}\nonumber
 \Ex \grad \ic(\th,\psi)=\grad \c(\th),\qquad \Ex\hess \ic(\th,\psi)=\hess \c(\th).
\end{equation}
\end{lemma}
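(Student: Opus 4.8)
The plan is to prove all three identities by iterated expectation, exploiting two exact cancellations built into \texttt{ISReMC}: the resampling normaliser $\W_\kr$ cancels against the resampling probabilities $\W_i/\W_\kr$, and the importance weight $\w(Y_i)=f_\psi(Y_i)/h_\psi(Y_i)$ cancels against the sampling density $h_\psi$. I would establish the first identity in full and obtain the gradient and Hessian versions by the same computation. By linearity of expectation it suffices to treat one summand of \eqref{Increm}, since the double average $\tfrac1r\sum_k\tfrac1n\sum_u$ has total weight one; I will show every summand has expectation $\c(\th)$.

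First I would condition on $\mathcal{G}=\sigma(Y_1,\dots,Y_l)$, the $\sigma$-field generated by the initial importance sample. Given $\mathcal{G}$ the weights $\W_i$ and $\W_\kr$ are fixed numbers, each resampled seed $\Ys_k$ equals $Y_i$ with probability $\W_i/\W_\kr$, and conditionally on $\Ys_k=Y_i$ the state $Y_k^u$ has law $\ker_\psi^u(Y_i,\cdot)$. Taking the conditional expectation of a single term $(\W_\kr/l)\,f_\th(Y_k^u)/f_\psi(Y_k^u)$ given $\mathcal{G}$, the deterministic prefactor $\W_\kr$ cancels the denominators of the resampling probabilities, leaving $\tfrac{1}{l}\sum_{i=1}^l \W_i \int \tfrac{f_\th(y)}{f_\psi(y)}\ker_\psi^u(Y_i,\d y)$. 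Here I use that $u\ge s+1\ge 1$, so at least one application of the kernel always occurs.

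Next I would take the outer expectation over the i.i.d.\ draws $Y_i\sim h_\psi$. Since the $l$ summands are identically distributed, this reduces to $\Ex_{Y\sim h_\psi}\bigl[\w(Y)\int \tfrac{f_\th(y)}{f_\psi(y)}\ker_\psi^u(Y,\d y)\bigr]$, and writing the expectation as an integral against $h_\psi$ the weight $\w(Y)=f_\psi(Y)/h_\psi(Y)$ cancels $h_\psi$, producing $\int f_\psi(y')\int \tfrac{f_\th(y)}{f_\psi(y)}\ker_\psi^u(y',\d y)\,\d y'$. Now I invoke invariance of $\ker_\psi$ for $f_\psi$: iterating $f_\psi(y)\d y=\int f_\psi(y')\ker_\psi(y',\d y)\d y'$ via Chapman--Kolmogorov gives the same identity for $\ker_\psi^u$, so after Fubini the inner integral collapses to $f_\psi(y)\d y$, the factor $f_\psi$ cancels, and the summand's expectation equals $\int f_\th(y)\d y=\c(\th)$. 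Averaging over $k$ and $u$ then yields $\Ex\,\ic(\th,\psi)=\c(\th)$.

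Finally, for the derivative identities I would differentiate \eqref{Increm} termwise; since $\psi$, $l$, $r$, $s$, $n$ and the denominators $f_\psi(Y_k^u)$ do not depend on $\th$, this merely replaces $f_\th$ by $\grad f_\th$ (respectively $\hess f_\th$) in each summand, and the identical computation gives $\int\grad f_\th\,\d y$ and $\int\hess f_\th\,\d y$, which equal $\grad\c(\th)$ and $\hess\c(\th)$ by Assumption \ref{Derivatives}. The main obstacle is the resampling step: one must verify that the random, data-dependent prefactor $\W_\kr$ and the random resampling probabilities combine \emph{exactly} before any expectation over $\mathcal{G}$ is taken, so that the residual weight lines up with $h_\psi$ and the kernel invariance can be applied; once this is handled, the MCMC averaging over $u$ and the multistart over $k$ are immaterial to unbiasedness, and only integrability (to justify Fubini and differentiation under the integral) needs checking.
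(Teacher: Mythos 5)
Your proposal is correct and follows essentially the same route as the paper's proof: an iterated-expectation argument through the stages of \texttt{ISReMC}, using the exact cancellation of $\W_\kr$ against the resampling probabilities $\W_i/\W_\kr$, the cancellation of the importance weight $f_\psi/h_\psi$ against the sampling density $h_\psi$, and the invariance of $\ker_\psi$ with respect to $f_\psi$ to collapse the remaining integral to $\int f_\th(y)\,\d y=\c(\th)$, with the derivative identities obtained by the identical computation under Assumption \ref{Derivatives}. The only difference is organizational: the paper conditions outward-in (first on the seeds $\Ys_k$, defining the MCMC-average function $a(y)$, then on $Y_1,\dots,Y_l$), whereas you condition once on $\sigma(Y_1,\dots,Y_l)$ and integrate the resampling and MCMC randomness jointly; the mathematical content is the same.
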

\begin{proof}
We can express function $\c(\th)$ and its derivatives
as ``unnormalized expectations'' with respect to the probability {distribution with density} $\pi_\psi=f_\psi/c(\psi)$:
\begin{equation}\nonumber
\begin{split}
 \c(\th)&=\Ex_{Y\sim \pi_\psi} \frac{f_\th(Y)}{f_\psi(Y)}\c(\psi),\\
 \grad \c(\th)&=\Ex_{Y\sim \pi_\psi} \frac{\grad f_\th(Y)}{f_\psi(Y)}\c(\psi),
 \qquad \hess\c(\th)=\Ex_{Y\sim \pi_\psi} \frac{\hess f_\th(Y)}{f_\psi(Y)}\c(\psi).
\end{split}
 \end{equation}
Let us focus on $\Ex \ic(\th,\psi)$. Write
\begin{equation}
 a(y)=\Ex\left(\frac{1}{n}\sum_{u=s+1}^{s+n}\frac{f_\th(Y^{u})}{f_\psi(Y^{u})}\Bigg|Y^{0}=y\right)
\end{equation}
for the expectation of a \textit{single} MCMC estimate started at $Y^0=y$. Kernel $\ker_\psi$ preserves $\pi_\psi$ by assumption,
therefore $\Ex_{Y\sim \pi_\psi} a(Y)=\Ex_{Y\sim \pi_\psi} f_\th(Y)/f_\psi(Y)=c(\th)/c(\psi)$. Put differently,
$\int a(y)f_\psi(y)\d y=c(\th)$.

We make a simple observation that
\begin{equation}\nonumber
  \Ex \left(\ic(\th,\psi) \big|Y_1,\ldots,Y_l,\Ys_1,\ldots,\Ys_r \right)=\frac{\W_\kr}{l}\frac{1}{r}\sum_{k=1}^r a(\Ys_k).
\end{equation}
This conditional expectation takes into account only randomness of the MCMC estimate in step 4 of the algorithm.
Now we consecutively ``drop the conditions'':
\begin{equation}\nonumber
  \Ex \left(\ic(\th,\psi) \big|Y_1,\ldots,Y_l\right)=\frac{\W_\kr}{l}\sum_{i=1}^l a(Y_i)\frac{W_i}{W_\kr}
                                                    =\frac{1}{l}\sum_{i=1}^l a(Y_i)W_i.
\end{equation}
{The expectation above takes into account the randomness of the resampling in step 3.}
Finally, since $Y_i\sim h_\psi$ in step 1, we have
\begin{equation}\nonumber
\begin{split}
  \Ex\ic(\th,\psi)&=\Ex a(Y_i)W_i
                            =\Ex_{Y\sim h_\psi} a(Y)\frac{f_\psi(Y)}{h_\psi(Y)}\\
                  &=\int a(y)f_\psi(y)\d y=c(\th).
\end{split}
  \end{equation}
This ends the proof for $\ic$. Exactly the same  argument applies to $\grad \ic$ and $\hess \ic$.
\end{proof}

We can embed the unbiased estimators produced by \texttt{ISReMC} in
our general adaptive scheme \texttt{AdapMCML}. At each step $m$ of the adaptive algorithm, we
have a new control parameter $\psi_m$.  We generate a sample from $h_{\psi_m}$, compute weights, resample and run MCMC using
$\psi_m$. Note that the whole sampling scheme at stage $m$ (including computation of weights) depends on $\psi_m$ but not on $\th$.
In the adaptive algorithm random variable $\psi_{m+1}$ is $\ef_{m}$ measurable,
where $\ef_{m}$ is the history of simulation up to stage $m$.
Therefore the sequence of incremental estimates $\ic(\th,\psi_m)$ satisfies,
for every $\th\in\Theta$,
\begin{equation}\label{MGcond0}
 \Ex(\ic(\th,\psi_{m+1})|\ef_{m})=\c(\th).
\end{equation}
Moreover, first and second derivatives exist and
\begin{equation}\label{MGcond12}
 \Ex(\grad \ic(\th,\psi_{m+1})|\ef_{m})=\grad \c(\th),\qquad \Ex(\hess \ic(\th,\psi_{m+1})|\ef_{m})=\hess \c(\th).
\end{equation}
Formulas \eqref{MGcond0} and \eqref{MGcond12} are analogues of \eqref{MG0} and \eqref{MG12}.
\goodbreak


\subsection{Asymptotics of adaptive MCML}

In this subsection we restrict our considerations to \textit{exponential families on finite spaces}.
This will allow us to prove main results without formulating complicated technical assumptions
(integrability  conditions analoguous to Assumption \ref{Momentdiff} would be cumbersome and difficult to verify).
Some models with important applications, such as \textit{autologistic} one, satisfy the assumptions below.

\begin{ass}\label{ExpoFamily}
Let
\begin{equation}\nonumber
 f_\th(y)=\exp[\th\t t(y)],
\end{equation}
where $t(y)\in \Rl^d$ is the vector of sufficient statistics
and $\th\in\Theta=\Rl^d$. Assume that $y$ belongs to a finite space $\Y$.
\end{ass}
Now, since $\Y$ is finite  (although possibly very large),
\begin{equation}\nonumber
 \c(\th)=\sum_y \exp[\th\t t(y)].
\end{equation}
Note that Assumption \ref{Derivatives} is automatically satisfied.

\begin{ass}\label{CompactPsi}
Control parameters $\psi$ belong to a compact set $\Psi\subset\Rl^d$.
\end{ass}

We consider algorithm \texttt{AdapMCML} with incremental estimates $\ic$ produced by \texttt{ISReMC}.
The likelihood ratio in \eqref{Increm} and its derivatives assume the following form:
\begin{equation}\label{dcExp}%
\begin{split}
   \frac{f_\th(Y)}{f_\psi(Y)}&=\exp[(\th-\psi)\t t(Y)], \\
    \frac{\grad f_\th(Y)}{f_\psi(Y)}&=t(Y)\exp[(\th-\psi)\t t(Y)],\\
     \frac{\hess f_\th(Y)}{f_\psi(Y)}&= t(Y)t(Y)\t\exp[(\th-\psi)\t t(Y)]
\end{split}
\end{equation}
(the derivatives are with respect to $\th$, with $\psi$ fixed).
Assumptions \ref{ExpoFamily} and \ref{CompactPsi} together with Assumption \ref{ContinuityPsi} imply that
$\ic(\th,\psi_j)$ are uniformly bounded, if $\th$ belongs to a compact set.
Indeed, the importance weights $W_i$ in \eqref{Increm} are uniformly bounded by Assumptions \ref{CompactPsi} and \ref{ContinuityPsi}.
Formula \eqref{dcExp} shows that the ratios $f_\th(y)/f_{\psi_j}(y)= \exp[(\th-\psi_j)\t t(y)]$ are also uniformly bounded
for $\psi_j$ and $\th$ belonging to bounded sets. Since the statistics $t(y)$ are bounded, the same argument shows
that $\grad\ic(\th,\psi_j)$ and $\hess\ic(\th,\psi_j)$ are uniformly bounded, too.

For exponential families, $\log \c(\th)$ and   $\log \mc_m(\th)$ are convex functions.
It is a well known property of exponential family that $\hess \log \c(\th)=\VAR_{Y\sim\pi_\th} t(Y)$ and thus it is a nonnegative
definite matrix. A closer look at $\mc_m(\th)$ reveals that $\hess \log \mc_m(\th)$ is also a variance-covariance matrix
with respect to some discrete distribution. Indeed, it is enough to note that $\mc_m(\th)$ is of the form
\begin{equation}\nonumber
   \mc_m(\th) =
\sum_{j,k,u} \exp[\th\t t_{j,k,u}]a_{j,k,u},
\end{equation}
for some $t_{j,k,u}\in \Rl^d$ and $a_{j,k,u}>0$ (although if \texttt{ISReMC} within \texttt{AdapMCML} is used to produce $\mc_m(\th)$ then
$t_{j,k,u}$ and $a_{j,k,u}$ are quite complicated random variables depending on $\psi_j$).

Let $\them$ be a {maximizer} of $\llm(\th)=\th\t t(\yo)-\log\mc_m(\th)$ and assume
that $\ths$ is the unique {maximizer} of $\llk(\th)=\th\t t(\yo)-\log \c(\th)$.

\begin{prop}
If Assumptions \ref{ExpoFamily}, \ref{CompactPsi} and \ref{ContinuityPsi}  hold, then
$\them\to \ths$ almost surely.
\end{prop}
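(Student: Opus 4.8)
The plan is to rely on \emph{concavity} rather than on the general hypo-convergence machinery of Theorem~\ref{thm:hypoconv}, which was tailored to the plain AIS estimate \eqref{AISestim} and does not cover the \texttt{ISReMC} increments. First I would establish pointwise almost sure convergence $\mc_m(\th)\to\c(\th)$ for each fixed $\th$. By \eqref{MGcond0} the centered increments $\ic(\th,\psi_j)-\c(\th)$ are martingale differences with respect to $\ef_{j-1}$, and $m(\mc_m(\th)-\c(\th))=\sum_{j=1}^m(\ic(\th,\psi_j)-\c(\th))$ by \eqref{MeanMCestim}. As already noted, under Assumptions~\ref{ExpoFamily}, \ref{CompactPsi} and \ref{ContinuityPsi} the quantities $\ic(\th,\psi_j)$ are uniformly bounded for $\th$ in a compact set, so their conditional second moments are bounded and the SLLN for martingales (Theorem~\ref{SLLN}) yields $\mc_m(\th)\to\c(\th)$ a.s., hence $\llm(\th)\to\llk(\th)$ a.s. Applying this to a countable dense set $Q\subset\Rl^d$ and intersecting the resulting probability-one events, I obtain a single event of probability one on which $\llm(\th)\to\llk(\th)$ for every $\th\in Q$.

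Next I would upgrade pointwise convergence to locally uniform convergence using convexity. Since $\mc_m(\th)=\sum_{j,k,u}\exp[\th\t t_{j,k,u}]a_{j,k,u}$ with $a_{j,k,u}>0$, the function $\log\mc_m$ is convex, so each $\llm$ is concave, as is the finite limit $\llk(\th)=\th\t t(\yo)-\log\c(\th)$. A standard result on convergence of convex functions -- pointwise convergence on a dense set forces uniform convergence on every compact set (see \cite{Rockafellar2009}) -- then gives, on the same probability-one event, $\llm\to\llk$ uniformly on compacta.

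Finally I would deduce $\them\to\ths$ from concavity combined with uniqueness of the maximizer, which is exactly what lets me avoid assuming a priori boundedness of $(\them)$ (the hypothesis needed in the earlier, non-concave, consistency proposition). Fix $\rho>0$ and set $S=\{\th:|\th-\ths|=\rho\}$. By uniqueness of $\ths$ and compactness of $S$, $\eta:=\llk(\ths)-\max_{\th\in S}\llk(\th)>0$. Uniform convergence on the closed ball $\bar B(\ths,\rho)$ gives, for all large $m$, $\llm(\ths)>\llk(\ths)-\eta/3$ while $\llm(\th)<\llk(\ths)-2\eta/3$ for every $\th\in S$, so that $\llm(\ths)>\max_{\th\in S}\llm(\th)$. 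Concavity then confines every maximizer to the open ball: if some maximizer lay outside $\bar B(\ths,\rho)$, the segment joining $\ths$ to it would meet $S$ at a point $\th'$ where, by concavity and $\llm(\them)\ge\llm(\ths)$, we would have $\llm(\th')\ge\llm(\ths)$, contradicting $\llm(\th')<\llm(\ths)$. The same inequality shows a maximizer exists inside the ball, so existence requires no separate argument. As $\rho>0$ is arbitrary, $\them\to\ths$ almost surely.

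The main obstacle is the passage from pointwise to locally uniform behaviour \emph{and} the control of the maximizer without an a priori boundedness hypothesis; concavity resolves both at once, which is precisely why the result is stated only for exponential families on finite spaces. The remaining points -- verifying uniform boundedness of the \texttt{ISReMC} increments so that the martingale SLLN applies, and invoking the convex-convergence theorem -- are routine given the structure of \eqref{dcExp}.
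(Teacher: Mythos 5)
Your proposal is correct and follows essentially the same route as the paper's own proof: martingale SLLN applied to the uniformly bounded increments $\ic(\th,\psi_j)-\c(\th)$ gives pointwise a.s.\ convergence $\llm(\th)\to\llk(\th)$, concavity upgrades this to uniform convergence on compact sets (the paper cites \cite[Th. 10.8]{Rockafellar1970} for exactly this step), and uniqueness of $\ths$ then yields consistency. The only difference is that you spell out what the paper compresses into ``the conclusion follows immediately'' --- namely the countable dense set needed to get a single probability-one event, and the sphere-plus-concavity argument confining the maximizers --- which are correct and indeed the implicit content of the paper's final sentence.
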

\begin{proof} Boundedness of $\ic(\th,\psi_m)$ for a fixed $\th$ together with \eqref{MGcond0} implies that $\ic(\th,\psi_m)-\c(\th)$ is a
bounded sequence of martingale differences. It satisfies the assumptions of SLLN for martingales in Appendix \ref{Martingales}.
Therefore  $\mc_m(\th)\to \c(\th)$. Consequently, we also have $\llm(\th)\to \llk(\th)$, pointwise.
Pointwise convergence of convex functions
implies uniform convergence on compact sets \cite[Th. 10.8]{Rockafellar1970}. The  conclusion follows immediately.
\end{proof}

\begin{thm}\label{AsNormGeneral} If Assumptions \ref{Dpositive}, \ref{ContinuityPsi}, \ref{Diminishing}, \ref{ExpoFamily} and \ref{CompactPsi}
hold, then
 \begin{equation}\nonumber
  \sqrt{m}(\them-\ths)\to \mathcal{N} (0,D^{-1}V D^{-1}) \text{ in distribution},
 \end{equation}
where $D=\hess \llk(\ths)$ and
\begin{equation}\nonumber
   V=\frac{1}{\c(\ths)^{2}}\VAR \left[\grad \ic(\ths,\psis)- \frac{\grad\c(\ths)}{\c(\ths)}\ic(\ths,\psis)\right],
\end{equation}
where $\ic(\ths,\psis)$ is a result of the IS / Resampling algorithm \texttt{ISReMC}, described in the previous subsection,  with $\psi=\psis$ and $\th=\ths$.
\end{thm}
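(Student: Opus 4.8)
The plan is to follow the same route as the proof of Theorem~\ref{AsNormAIS}, invoking the criterion of \cite[Theorem VII.5]{Pollard1984}: it suffices to establish the asymptotic normality of the score \eqref{as_norm1}, $\sqrt{m}\grad\llm(\ths)\tod\mathcal{N}(0,V)$, together with the uniform quadratic expansion \eqref{sup1}. Two simplifications specific to the present setting should be noted at the outset. First, since $\llm$ is convex for exponential families, the square-root consistency required by the criterion (Assumption~\ref{ConsistencySqrt}) holds automatically, and $\them\to\ths$ almost surely by the preceding Proposition. Second, as established just before the statement, the estimates $\ic(\ths,\psi_j)$, $\grad\ic(\ths,\psi_j)$ and $\hess\ic(\ths,\psi_j)$ are \emph{uniformly bounded}; together with the martingale identities \eqref{MGcond0}--\eqref{MGcond12}, this lets me apply the SLLN for martingales (Theorem~\ref{SLLN}) to obtain the analogue of \eqref{Consistency}, namely $\mc_m(\ths)\to\c(\ths)$, $\grad\mc_m(\ths)\to\grad\c(\ths)$ and $\hess\mc_m(\ths)\to\hess\c(\ths)$ almost surely.

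For \eqref{as_norm1} I would reuse the algebraic decomposition \eqref{NumDenom}: writing $\grad\llm(\ths)=\big(\tfrac{\grad\c(\ths)}{\c(\ths)}\mc_m(\ths)-\grad\mc_m(\ths)\big)/\mc_m(\ths)$, the denominator converges to $\c(\ths)$, so by Slutsky it remains to prove asymptotic normality of the numerator, which equals $-\tfrac{1}{m}\sum_{j=1}^m\xi_j$ with
\begin{equation}\nonumber
\xi_j=\grad\ic(\ths,\psi_j)-\frac{\grad\c(\ths)}{\c(\ths)}\ic(\ths,\psi_j).
\end{equation}
By \eqref{MGcond0}--\eqref{MGcond12} the $\xi_j$ are martingale differences. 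Because $\psi_j$ is $\ef_{j-1}$-measurable and the \texttt{ISReMC} run at stage $j$ uses fresh randomness, the conditional covariance $\Ex(\xi_j\xi_j\t|\ef_{j-1})$ equals a deterministic function $\Sigma(\psi_j)$, the covariance of the bracketed quantity under \texttt{ISReMC} with parameter $\psi_j$. The diminishing-adaptation Assumption~\ref{Diminishing} gives $\psi_j\to\psis$ almost surely, so \emph{provided $\Sigma$ is continuous at $\psis$} one gets $\Sigma(\psi_j)\to\Sigma(\psis)=\c(\ths)^2V$ and hence $\tfrac{1}{m}\sum_{j=1}^m\Ex(\xi_j\xi_j\t|\ef_{j-1})\toas\c(\ths)^2V$. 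The Lyapunov condition is immediate here, since boundedness of $\ic$ and $\grad\ic$ makes the $\xi_j$ uniformly bounded, so all higher conditional moments are bounded. The martingale CLT (Theorem~\ref{CLT}) then yields \eqref{as_norm1}.

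For \eqref{sup1} I would, exactly as in Theorem~\ref{AsNormAIS}, Taylor-expand and bound the left-hand side by $\tfrac{M^2}{2}\sup_{|\th-\ths|\le M/\sqrt m}\Vert\hess\llm(\th)-\hess\llm(\ths)\Vert+\tfrac{M^2}{2}\Vert\hess\llm(\ths)-\hess\ell(\ths)\Vert$. The second term tends to zero by the almost-sure convergences recorded above. The role played by Assumption~\ref{ASE} in the earlier proof is here supplied for free by the exponential-family structure: since $\hess\log\mc_m(\th)=\VAR_{Q_\th}t$ for a discrete distribution $Q_\th$ supported on values of the (bounded) sufficient statistic $t$ (by \eqref{dcExp}, the support points are the $t(Y^u_k)\in\{t(y):y\in\Y\}$), its $\th$-derivative is a third central moment of a uniformly bounded random variable and is therefore dominated by a deterministic constant $L$ independent of $\th$ and $m$; consequently $\Vert\hess\llm(\th)-\hess\llm(\ths)\Vert\le L|\th-\ths|$ and the first term is $O(M/\sqrt m)\to0$. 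This establishes \eqref{sup1} and completes the argument.

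The main difficulty is the continuity of $\psi\mapsto\Sigma(\psi)$ at $\psis$. Unlike in Theorem~\ref{AsNormAIS}, where the conditional variance was an explicit integral handled by dominated convergence, here $\Sigma(\psi)$ is the covariance of the output of the entire \texttt{ISReMC} pipeline --- sampling from $h_\psi$, forming and normalizing the weights $\W_i$, resampling, and running the kernel $\ker_\psi$. On the finite space $\Y$ each step produces a distribution that depends continuously on $\psi$, given Assumption~\ref{ContinuityPsi} and the continuous dependence of $\ker_\psi$ on $\psi$; since everything lives on a finite product space and is uniformly bounded, bounded convergence then gives continuity of $\Sigma$. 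Making this dependence precise --- in particular the continuity of $\ker_\psi$ --- is the one point that genuinely needs the finite-space Assumptions~\ref{ExpoFamily} and \ref{CompactPsi} rather than merely mirroring the earlier proof.
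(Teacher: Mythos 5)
Your proposal is correct and follows essentially the same route as the paper's proof: the Pollard criterion with the decomposition \eqref{NumDenom}, the martingale differences $\xi_j=\grad\ic(\ths,\psi_j)-\tfrac{\grad\c(\ths)}{\c(\ths)}\ic(\ths,\psi_j)$ handled by the martingale CLT and SLLN, uniform boundedness (hence bounded third derivatives of $\log\mc_m$) standing in for Assumption \ref{ASE}, and square-root consistency recovered from concavity rather than from Assumption \ref{ConsistencySqrt}. The one point where you go beyond the paper is the convergence of the conditional covariances $\Sigma(\psi_j)\to\c(\ths)^2V$: the paper merely asserts that \eqref{as_norm1} is justified ``exactly in the same way'' as in Theorem \ref{AsNormAIS}, whereas you correctly observe that this step needs continuity of the \texttt{ISReMC} output covariance in $\psi$ --- including continuous dependence of the kernel $\ker_\psi$ on $\psi$, which is not among the paper's stated assumptions --- so your treatment is, if anything, more careful at precisely the point the paper glosses over.
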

Note that $\ic(\ths,\psis)$ is a purely imaginary object, being a result of an algorithm initialized at a ``limiting instrumental parameter'' $\psis$
and evaluated at the ``true MLE'' $\ths$, both unknown. It is introduced only to concisely describe the variance/covariance matrix $V$.
Note also that ${\grad\c(\ths)}/{\c(\ths)}$ is equal to $t(\yo)$, the \textit{observed} value of the sufficient statistic.

\begin{proof}[of Theorem \ref{AsNormGeneral}] The proof is similar to that of Theorem \ref{AsNormAIS}, so we will not repeat all the details.
The key argument is again based on  SLLN and CLT for martingales (see Appendix \ref{Martingales}).
In the present situation we have more complicated estimators $\ic(\th,\psi_j)$ than in Theorem \ref{AsNormAIS}. They are now given by \eqref{Increm}.
On the other hand, we work under the  assumption that $f_\th$ is an exponential family on a finite state space $\Y$. This implies that
conditions \eqref{MGcond0} and \eqref{MGcond12} are fulfilled and the martingale differences therein are uniformly bounded
(for any fixed $\th$ and also for $\th$ running through a compact set).
Concavity of $\llm(\th)$ and $\llk(\th)$ further simplifies the argumentation.

As in the proof of Theorem \ref{AsNormAIS}, we claim that \eqref{as_norm1} and \eqref{sup1} hold. The first of these conditions, \eqref{as_norm1},
is justified exactly in the same way: by applying the CLT to the numerator and SLLN to the denominator of \eqref{NumDenom}.
Now, we consider martingale differences given by
\begin{equation}\nonumber
\xi_j= \grad\ic(\ths,\psi_j)-
                        \dfrac{\grad c(\ths)}{c(\ths)} \ic(\ths,\psi_j).
\end{equation}
It follows from the discussion preceding the theorem that $\xi_j$ are uniformly bounded, so the CLT can be  applied. Similarly,
SLLN can be applied to  $\ic(\ths,\psi_j)-c(\ths)$.

Assumption  \eqref{ASE} holds because third order derivatives of $\log \mc_m(\th)$ are uniformly bounded in the neighbouhood of $\ths$.
This allows us to infer condition \eqref{sup1} in the same way as in the proof of Theorem \ref{AsNormAIS}.

Note also that we do not need Assumption \ref{ConsistencySqrt}.
To deduce the conclusion of the theorem from  \eqref{sup1} we have to know that $\them$ is square-root consistent. But this follows from the facts that
$\llm(\th)$ is concave and the maximizer of the quadratic function $- (\th-\ths)\t \grad\llm(\ths)-\frac{1}{2}(\th-\ths)\t D(\th-\ths)$
in \eqref{sup1} is square-root consistent by \eqref{as_norm1}.
\end{proof}


\section{Simulation results}

In a series of small scale simulation experiments, we compare two algorithms. The first one, used as a ``Benchmark'' is a non-adaptive MCML.
The other is  \texttt{AdapMCML} which uses \texttt{ISReMC} estimators, as described in Section \ref{Generalized}. Synthetic data used in our study
are generated from autologistic model, described below. Both algorithms use Gibbs Sampler (GS) as an MCMC subroutine and both
use Newton-Raphson iterations to maximize MC log-likelihood approximations.

\subsection{Non-adaptive and adaptive Newton-Raphson-type algorithms}

Well-known Newton-Raphson (NR) method in our context updates points $\th_m$ approximating maximum of the log-likelihood as follows:
\begin{equation}\nonumber
 \th_{m+1}= \th_m +\hess \ell_m(\th_m)^{-1} \grad\ell_m(\th_m),
\end{equation}
where $\ell_m$ is given by \eqref{MCloglik}.

\textbf{ Non-adaptive} algorithms are obtained when some fixed value of the ``instrumental parameter''
is used to produce MC samples. Below we recall a basic version of such an algorithm, proposed be Geyer \cite{Geyer1994} and examined e.g.\ in
\cite{HuWu1998}. If we consider an exponenial family given by Assumption \ref{ExpoFamily}, then
$\llm(\th) =  \th\t t(\yo)  - \log \mc_m(\th)$. Let $\psi$ be fixed and $Y_{0},Y_{1},\ldots,Y_{s},\ldots,Y_{s+m}$ be  samples
approximately drawn from  distribution $\pi_\psi\propto f_\psi$. 
In practice an MCMC method is applied to produce such samples, $s$ stands for a burn-in.
In all our experiments the MCMC method is  a deterministic scan Gibbs Sampler (GS).
Now, we let
\begin{equation}\nonumber
\begin{split}
  \mc_m(\th) \propto\frac{1}{m}\sum_{u=s+1}^{s+m} \exp[(\th-\psi)\t t(Y_{u})].\\
\end{split}
\end{equation}
Consequently, if $\omega_u(\th)=\exp[(\th-\psi)\t t(Y_{u})]$ and $\omega_{\kr}(\th)=\sum_{u=s+1}^{s+m}\omega_u(\th)$, then
the derivatives of the log-likelihood are expressed via weighted moments,
\begin{equation}\nonumber
\begin{split}
\grad \llm(\th) &=t(\yo)-\overline{t(Y)},\quad \overline{t(Y)}=\frac{1}{\omega_{\kr}(\th)}\sum_{u=s+1}^{s+m} \omega_u(\th)t(Y_{u}),\\
\hess \llm(\th) &=-\frac{1}{\omega_{\kr}(\th)}\sum_{u=s+1}^{s+m} \omega_u(\th)(t(Y_{u}-\overline{t(Y)})(t(Y_{u})-\overline{t(Y)})\t.\\
\end{split}
\end{equation}

 \textbf{The adaptive} algorithm uses $\mc_m(\th)$ given by \eqref{MeanMCestim}, with summands $\ic(\th,\psi_j)$ computed by \texttt{ISReMC},
exactly as described in Section \ref{Generalized}. The MCMC method imbedded in \texttt{ISReMC} is GS, the same as in the non-adaptive algorithm.
Importance sampling distribution $h_\psi$ in steps 1 and 2 of \texttt{ISReMC} is pseudo-likelihood, described by formula \eqref{PseudoLik} in the next subsection.
Computation of $\psi_{m+1}$ in step 4 of \texttt{AdapMCML} uses one NR iteration:
{$\psi_{m+1}=\psi_m +\hess \ell_m(\psi_m)^{-1} \grad\ell_m(\psi_m)$,
where $\ell_m$ is given by \eqref{MCloglik} with $\mc_m$ produced by \texttt{AdapMCML}.}


\subsection{Methodology of simulations}

For our experiments we have chosen  the autologistic model, one of chief motivating examples for MCML.
It is given by a probability distribution on $\Y=\{0,1\}^{d\times d}$ proportional to
\begin{equation}\notag 
f_{\th}(y)=\exp\left(
  \th_{0}\sum_{r}y^{(r)} + \th_{1}\sum_{r\sim s} y^{(r)}y^{(s)}\right),
\end{equation}
where $r\sim s$ means that two points $r$ and $s$ in the $d\times d$ lattice are neighbours.
The pseudo-likelihood $h_\psi$ is given by
\begin{equation}\label{PseudoLik}
h_{\psi}(y)\propto\prod_{r} \exp\left( \th_{0}y^{(r)}+
  \th_{1}\sum_{s:r\sim s}y^{(r)}\yo^{(s)}\right).
\end{equation}

In our study we considered lattices of dimension
$d=10$ and $d=15$. 
The values of sufficient statistics
$T=\left(\sum_{r}y^{(r)},\sum_{r\sim s} y^{(r)}y^{(s)}\right)$,
exact ML estimators $\ths$ and maxima of the log-likelihoods are in the Tables 1 and 2 below.
We report results of several repeated runs of a ``benchmark'' non-adaptive algorithm and our adaptive
algorithm. The initial points are 1) the maximum pseudo-likelihood (MPL) estimate, denoted by $\hat\th$ (also included in the tables) and
2) point $(0,0)$. Number of runs is $100$ for $d=10$ and $25$ for $d=15$. Below we describe the parameters
and results of these simulations. Note that we have chosen parameters for both algorithms in such a way which allows for a ``fair comparison'', that
is the amount of computations and number of required samples are similar for the benchmark and adaptive algorithms.

\textbf{For $d=10$:} In benchmark MCML, we used $1000$ burn-in and $39\,000$ collected realisations of the Gibbs sampler;
then $20$ iterations of Newton-Raphson were applied. \texttt{AdapMCML} had $20$ iterations; parameters within \texttt{ISReMC} were
$l=1000$, $r=1$, $s=100$, $n=900$.

$$
\begin{tabular}{|c|c|c|c|}
\multicolumn{4}{c}{Table 1.}\\
\hline
Statistic $T$        & ML   $\ths$       & Log-Lik $\ell(\ths)$ & MPL  $\hat\th$\\
\hline
$(59,74)$  &  $(-1.21,0.75)$ & $ -15.889991$ & $(-1.07,0.66)$\\
\hline
\end{tabular}
$$
\bigskip

The results are shown in Figures 1 and 2.

\begin{figure}\label{BoxD10}

\centering

 \includegraphics[width=\textwidth]{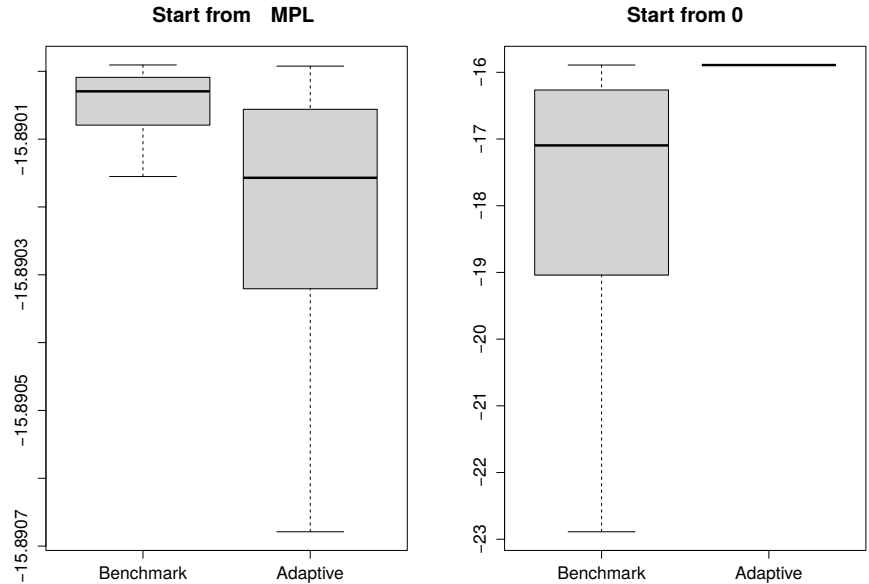}
\caption{Log-likelihood at the output of MCML algorithms; $d=10$; $100$ runs.}
 \end{figure}

\bigskip\goodbreak

\begin{figure}\label{ContourD10}
\centering
 \includegraphics[width=\textwidth]{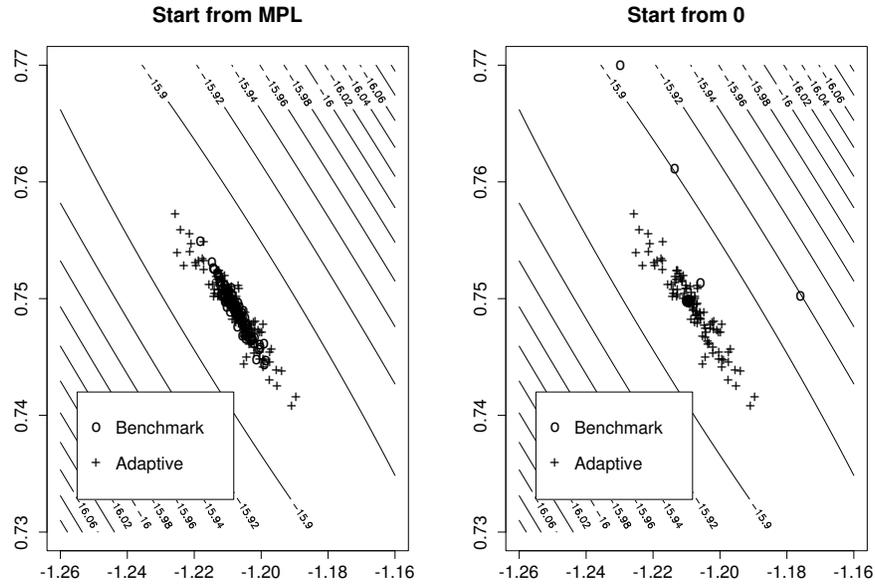}
\caption{Output of MCML algorithms;  $d=10$; 100 repetitions.}
 \end{figure}

\bigskip\goodbreak

\textbf{For $d=15$:} In benchmark MCML, we used $10\,000$ burn-in and $290\,000$ collected realisations of the Gibbs sampler;
then $10$ iterations of Newton-Raphson were applied. \texttt{AdapMCML} had $10$ iterations; parameters within \texttt{ISReMC} were
$l=10\,000$, $r=1$, $s=1000$, $n=19\,000$.

$$
\begin{tabular}{|c|c|c|c|}
\multicolumn{3}{c}{Table 2.}\\
\hline
Statistic $T$        & ML   $\ths$       & Log-Lik $\ell(\ths)$ & MPL  $\hat\th$\\
\hline
$(142,180)$  &  $(-0.46,0.43)$ & $12.080011$ & $(-0.57,0.54)$ \\
\hline
\end{tabular}
$$
\bigskip

The results are shown in Figures 3 and 4. The benchmark algorithm started from 0 for $d=15$  failed, so only the results for the adaptive algorithm are given in the right parts of Figures 3 and 4.

\begin{figure}[ht!]\label{BoxD15}
\centering
 \includegraphics[width=\textwidth]{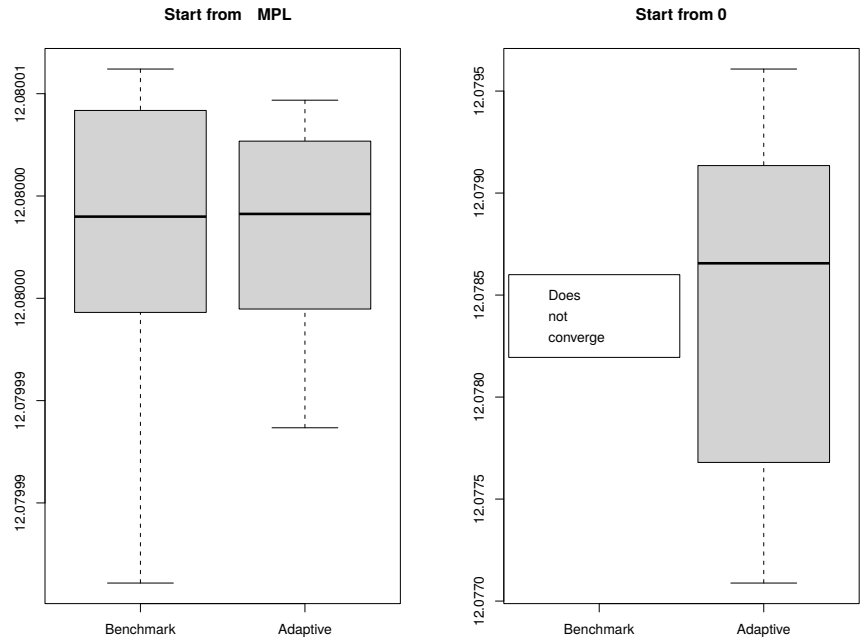}
\caption{Log-likelihood at the output of MCML algorithms; $d=15$; 25 repetitions.}
 \end{figure}

\begin{figure}\label{ContourD15}
\centering
 \includegraphics[width=\textwidth]{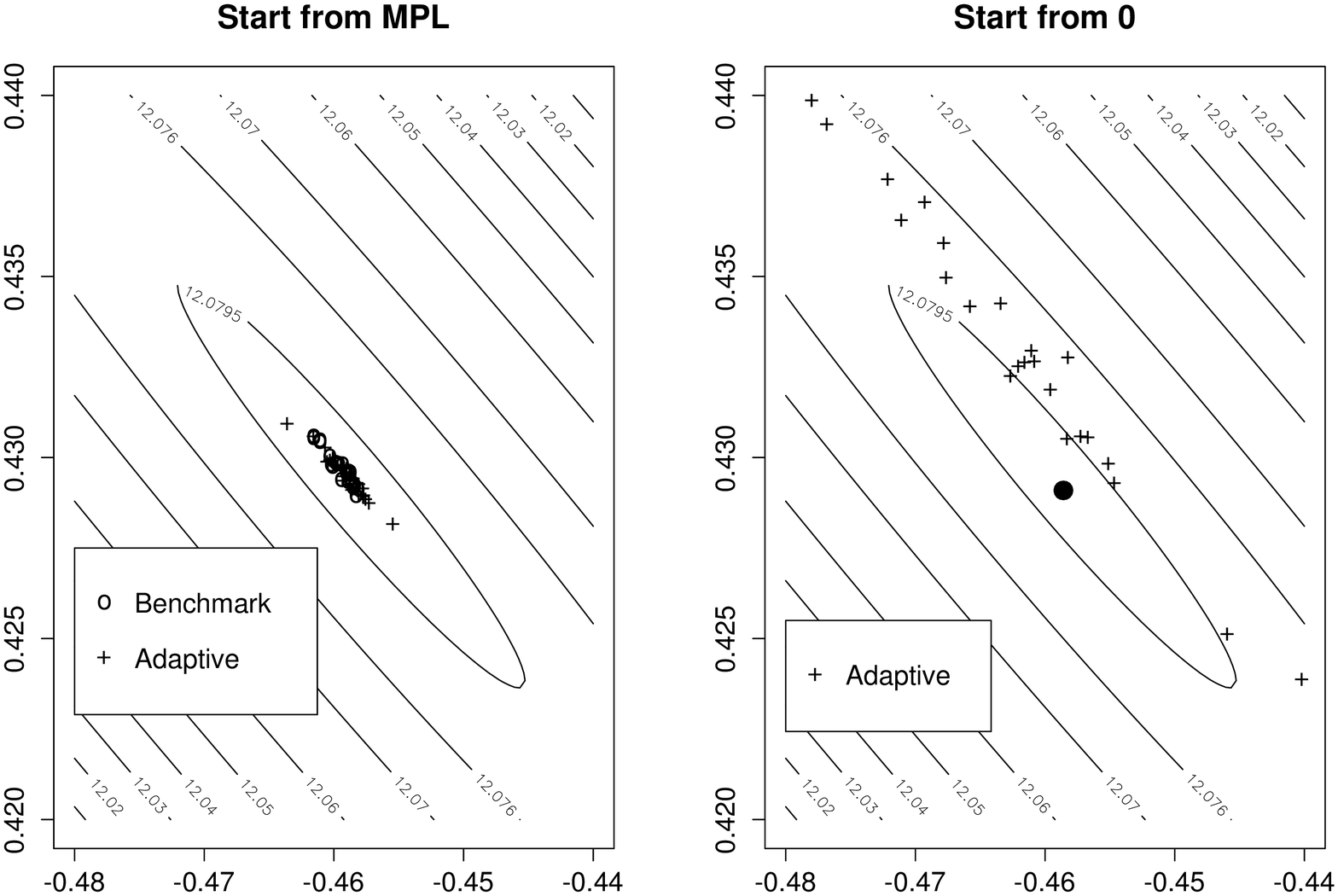}
\caption{Output of MCML algorithms;  $d=15$; 25 repetitions.}
 \end{figure}

\subsection{Conclusions}

The results of our simulations allow to draw only some preliminary conclusions, because the range of experiments was limited.
However, some general conclusions can be rather safely formulated. The performance of the benchmark, non-adaptive algorithm crucially
depends on the choice of starting point. It yields quite satisfactory results, if started sufficiently close tho the maximum likelihood,
for example from the maximum pseudo-likelihood estimate. Our adaptive algorithm is much more robust and stable in this respect.
If started from a good initial point, it may give slightly worse results than the  benchmark, but still is satisfactory (see Fig. \ref{BoxD10}). However, when the maximum pseudo-likelihood estimate is not that close to the maximum likelihood point, the adaptive algorithm yields an estimate with a lower variance (see Fig. \ref{BoxD15}). When started at a point distant from the maximum likelihood, such as $0$, it works much better than a non-adaptive algorithm. Thus the algorithm
proposed in our paper can be considered as more universal and robust alternative to a standard MCML estimator.

Finally let us remark that there are several possibilities of improving our adaptive algorithm. Some heuristically justified modifications
seem to converge faster and be more stable than the basic version which we described. Modifications can exploit the idea of resampling in a
different way and  reweigh past samples in subsequent steps. Algorithms based on stochastic approximation, for example such as that proposed in \cite{Y1988},
can probably be improved by using Newton-Raphson method instead of simple gradient descent. However, theoretical analysis of such modified algorithms
becomes more difficult and rigorous theorems about them are not available yet. This is why we decided not to include these modified
algorithms in this paper. Further research is needed to bridge a gap between practice and theory of MCML.
\bigskip

\textbf{Acknowledgement.} This work was partially supported by Polish National Science Center No. N N201 608 740.

\newpage

%
%

\appendix

\section{Appendix: martingale limit theorems}\label{Martingales}

For completeness, we cite the following martingale central limit theorem (CLT):
\begin{cthm}(\cite[Theorem 2.5]{Helland1982})\label{CLT}
Let $X_n = \xi_1 + \ldots + \xi_n$ be a mean-zero (vector valued) martingale. If there exists a symmetric positive definite matrix $V$ such that
\begin{equation}\label{Variances}
  \frac{1}{n} \sum_{j=1}^n \ee \Big(  \xi_j \xi_j^T | \ef_{j-1} \Big) \topr V,
\end{equation}
\begin{equation}\label{Lindeberg}
  \frac{1}{n} \sum_{j=1}^n \ee \Big(  \xi_j \xi_j^T \mathbf{1}_{|\xi_j| > \ve \sqrt{n}}\, | \ef_{j-1} \Big) \topr 0
    \quad\text{ for each } \ve > 0,
\end{equation}
then
\begin{equation}\nonumber
\frac{X_n}{\sqrt{n}} \tod N(0, V).
\end{equation}
\end{cthm}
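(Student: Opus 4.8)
The plan is to follow the same M-estimation route as in the proof of Theorem \ref{AsNormAIS}, invoking \cite[Theorem VII.5]{Pollard1984}: it suffices to establish the asymptotic normality of the score \eqref{as_norm1} and the local quadratic expansion \eqref{sup1} of $\llm$ about $\ths$. The simplification afforded by the present setting is that, under Assumption \ref{ExpoFamily}, both $\llm$ and $\llk$ are concave; consequently square-root consistency of $\them$ need not be postulated as in Assumption \ref{ConsistencySqrt}, but is instead a by-product of \eqref{as_norm1} together with the negative definiteness of $D$ (Assumption \ref{Dpositive}), exactly as at the close of the earlier proof. Throughout, the crucial structural inputs are the conditional unbiasedness relations \eqref{MGcond0} and \eqref{MGcond12}, which turn the relevant increments into martingale differences with respect to $(\ef_m)$.

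For \eqref{as_norm1} I would reproduce the decomposition \eqref{NumDenom}, writing $\grad\llm(\ths)$ as the ratio of $\frac{\grad\c(\ths)}{\c(\ths)}\mc_m(\ths)-\grad\mc_m(\ths)=-\frac{1}{m}\sum_{j=1}^m \xi_j$ to $\mc_m(\ths)$, where now
\begin{equation}\nonumber
\xi_j=\grad\ic(\ths,\psi_j)-\frac{\grad\c(\ths)}{\c(\ths)}\ic(\ths,\psi_j).
\end{equation}
By \eqref{MGcond0} and \eqref{MGcond12} the $\xi_j$ are martingale differences with $\Ex(\xi_j|\ef_{j-1})=0$. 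The discussion preceding the theorem shows that, under Assumptions \ref{ExpoFamily}, \ref{CompactPsi} and \ref{ContinuityPsi}, the incremental estimates $\ic(\ths,\psi_j)$ and $\grad\ic(\ths,\psi_j)$ produced by \texttt{ISReMC} are \emph{uniformly bounded}; hence so are the $\xi_j$, and the Lindeberg condition \eqref{Lindeberg} holds trivially (the indicator $\mathbf{1}_{|\xi_j|>\ve\sqrt{m}}$ vanishes for large $m$). The denominator obeys $\mc_m(\ths)\to\c(\ths)$ a.s.\ by the SLLN for martingales (Theorem \ref{SLLN}), so by Slutsky's theorem it remains only to verify the variance condition \eqref{Variances} for the $\xi_j$.

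The main obstacle is precisely this variance convergence. Since the entire \texttt{ISReMC} sampling scheme at stage $j$ depends on $\psi_j$ alone (not on $\th$) and $\psi_j$ is $\ef_{j-1}$-measurable, the conditional second moment factorizes as $\Ex(\xi_j\xi_j\t|\ef_{j-1})=\Phi(\psi_j)$, where
\begin{equation}\nonumber
\Phi(\psi)=\VAR\left[\grad\ic(\ths,\psi)-\frac{\grad\c(\ths)}{\c(\ths)}\ic(\ths,\psi)\right]
\end{equation}
is a deterministic function of $\psi$ with $\Phi(\psis)=\c(\ths)^2V$. Because $\Y$ is finite, $\Phi(\psi)$ is a finite sum of terms built from $h_\psi$, the weights $f_\psi/h_\psi$, the resampling probabilities, the kernel $\ker_\psi$ and the ratios $f_\ths/f_\psi$; by Assumptions \ref{ContinuityPsi} and \ref{ExpoFamily} each ingredient is continuous in $\psi$ (for the kernel this holds for the Gibbs sampler actually used), so $\Phi$ is continuous on the compact set $\Psi$. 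Diminishing adaptation (Assumption \ref{Diminishing}) gives $\psi_j\to\psis$ a.s., whence $\Phi(\psi_j)\to\Phi(\psis)$ a.s.; Ces\`aro averaging of a convergent sequence then yields $\frac{1}{m}\sum_{j=1}^m\Phi(\psi_j)\to\c(\ths)^2V$ a.s. This verifies \eqref{Variances} with limiting matrix $\c(\ths)^2V$, and Theorem \ref{CLT} delivers $-\frac{1}{\sqrt{m}}\sum_{j=1}^m\xi_j\tod\mathcal{N}(0,\c(\ths)^2V)$; dividing by $\mc_m(\ths)\to\c(\ths)$ establishes \eqref{as_norm1}.

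For \eqref{sup1} I would Taylor-expand $\llm$ exactly as in the proof of Theorem \ref{AsNormAIS}, bounding the left side by $\frac{M^2}{2}$ times $\sup_{|\th-\ths|\le M/\sqrt{m}}\Vert\hess\llm(\th)-\hess\llm(\ths)\Vert+\frac{M^2}{2}\Vert\hess\llm(\ths)-D\Vert$. The second term tends to $0$ because $\hess\mc_m(\ths)\to\hess\c(\ths)$ and $\grad\mc_m(\ths)\to\grad\c(\ths)$ a.s.\ by the SLLN, as in \eqref{Consistency}. For the first term, the asymptotic stochastic equicontinuity demanded by Assumption \ref{ASE} needs no separate hypothesis here: on the finite space the third-order $\th$-derivatives of $\log\mc_m$ are uniformly bounded in a neighbourhood of $\ths$ (the exponential weights in \eqref{Increm} and the statistics $t(y)$ are bounded on compacta), so $\hess\llm$ is Lipschitz near $\ths$ with a uniformly controlled constant and the supremum is $O_{\rm p}(1/\sqrt{m})$. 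Combining \eqref{as_norm1} and \eqref{sup1} through \cite[Theorem VII.5]{Pollard1984} gives $\sqrt{m}(\them-\ths)\tod\mathcal{N}(0,D^{-1}VD^{-1})$, as claimed.
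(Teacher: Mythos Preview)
Your proposal does not address the stated theorem. The statement in question is Theorem~\ref{CLT}, the martingale central limit theorem quoted from Helland~\cite[Theorem 2.5]{Helland1982}. The paper offers no proof of this result at all: it is merely cited in Appendix~\ref{Martingales} as a standard tool, and nothing in your write-up (which would have to construct a martingale CLT from scratch, presumably via Skorokhod embedding or characteristic-function tightness arguments as in \cite{HaHey1980,Helland1982}) bears on that.

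What you have actually written is a proof of Theorem~\ref{AsNormGeneral}. If that was the intended target, then your argument is essentially identical to the paper's own proof of that theorem: the same decomposition \eqref{NumDenom}, the same martingale differences $\xi_j=\grad\ic(\ths,\psi_j)-\frac{\grad\c(\ths)}{\c(\ths)}\ic(\ths,\psi_j)$, boundedness from the finite-state exponential family to dispense with Lindeberg, SLLN for the denominator, and the third-derivative bound to get equicontinuity \eqref{ASE} without assuming it. Your treatment of the conditional-variance condition \eqref{Variances} via the map $\psi\mapsto\Phi(\psi)$ and Ces\`aro averaging is slightly more explicit than the paper's, which simply asserts that the CLT ``can be applied''; you also tacitly assume continuity of $\ker_\psi$ in $\psi$, which the paper does not list among its hypotheses but which holds for the Gibbs sampler used in the simulations. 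None of this, however, constitutes a proof of Theorem~\ref{CLT} itself.
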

The Lindeberg condition \eqref{Lindeberg} can be replaced by a stronger Lyapunov condition
\begin{equation}\label{Lyapunov}
  \frac{1}{n} \sum_{j=1}^n \ee \Big(  |\xi_j|^{2+\alpha} | \ef_{j-1} \Big)\leq M
   \quad\text{ for some } \alpha > 0 \text{ and }M<\infty.
\end{equation}

A simple consequence of \cite[Theorem 2.18]{HaHey1980} (see also \cite{Chow1967}) is the following strong law of large numbers (SLLN).
\begin{cthm}\label{SLLN}
Let $X_n = \xi_1 + \ldots + \xi_n$ be a mean-zero martingale. If
\begin{equation}\nonumber
 \sup_j \Ex \Big(|\xi_j|^{1+\alpha} | \ef_{j-1} \Big)\leq M \quad\text{ for some } \alpha > 0 \text{ and }M<\infty
\end{equation}
then
\begin{equation}\nonumber
\frac{X_n}{n} \toas 0.
\end{equation}
\end{cthm}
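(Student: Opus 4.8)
The plan is to follow the template of Theorem \ref{AsNormAIS}, exploiting the fact that here we work in an exponential family on a finite space, so that the delicate integrability conditions of the earlier theorem collapse to uniform boundedness. By the $M$-estimator framework of \cite[Theorem VII.5]{Pollard1984} it suffices to establish the two conditions \eqref{as_norm1} and \eqref{sup1}. A key structural difference is that $\sqrt m$-consistency of $\them$ is now \emph{not} assumed but \emph{deduced}: since $\log\mc_m$ is convex (as noted before the theorem, $\mc_m(\th)=\sum_{j,k,u}\exp[\th\t t_{j,k,u}]a_{j,k,u}$ with $a_{j,k,u}>0$), the quadratic approximant in \eqref{sup1} is concave with negative-definite Hessian $D$ (Assumption \ref{Dpositive}), and its maximizer is $\sqrt m$-consistent by \eqref{as_norm1}; the uniform closeness \eqref{sup1} then transfers this to $\them$. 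Thus Assumption \ref{ConsistencySqrt} is replaced by concavity.

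For \eqref{as_norm1} I would reproduce the decomposition \eqref{NumDenom}, writing $\grad\llm(\ths)$ as a ratio with denominator $\mc_m(\ths)$ and numerator $-\frac1m\sum_{j=1}^m\xi_j$, where $\xi_j=\grad\ic(\ths,\psi_j)-\frac{\grad\c(\ths)}{\c(\ths)}\ic(\ths,\psi_j)$. Because each stage of \texttt{ISReMC} depends only on the $\ef_{j-1}$-measurable parameter $\psi_j$, Lemma \ref{lem:isremc} yields the conditional unbiasedness \eqref{MGcond0}--\eqref{MGcond12}, so the $\xi_j$ are martingale differences. The discussion preceding the theorem shows, via \eqref{dcExp} and Assumptions \ref{ExpoFamily}, \ref{CompactPsi}, \ref{ContinuityPsi}, that $\ic$, $\grad\ic$, $\hess\ic$ are uniformly bounded for $\th$ in a compact set; hence the $\xi_j$ are uniformly bounded. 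This immediately gives the Lyapunov condition \eqref{Lyapunov} and, applied to $\ic(\ths,\psi_j)-\c(\ths)$, the SLLN convergence $\mc_m(\ths)\toas\c(\ths)$ of the denominator (Theorem \ref{SLLN}). The martingale CLT (Theorem \ref{CLT}) then delivers $\frac1{\sqrt m}\sum_j\xi_j\tod\mathcal N(0,\c(\ths)^2V)$, and Slutsky's theorem combines numerator and denominator to give \eqref{as_norm1}.

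For \eqref{sup1} I would Taylor-expand exactly as in Theorem \ref{AsNormAIS}, reducing it to $\sup_{|\th-\ths|\le M/\sqrt m}\|\hess\llm(\th)-\hess\ell(\ths)\|\topr0$ and splitting off the term $\|\hess\llm(\ths)-\hess\ell(\ths)\|$, which vanishes by the SLLN consistency $\mc_m(\ths)\to\c(\ths)$, $\grad\mc_m(\ths)\to\grad\c(\ths)$, $\hess\mc_m(\ths)\to\hess\c(\ths)$ (the analogue of \eqref{Consistency}, again from Lemma \ref{lem:isremc} plus boundedness). The remaining term is the asymptotic stochastic equicontinuity of Assumption \ref{ASE}, which is \emph{not} a hypothesis here and must be verified: since $\log\mc_m(\th)$ has uniformly bounded third-order $\th$-derivatives near $\ths$ (the finite-space exponential form makes each such derivative a weighted moment of the bounded statistics $t_{j,k,u}$), $\hess\llm$ is Lipschitz there with a uniform constant, so its oscillation over an $O(1/\sqrt m)$ ball tends to zero.

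The main obstacle is the conditional-variance condition \eqref{Variances} of the martingale CLT, namely $\frac1m\sum_{j=1}^m\Ex(\xi_j\xi_j\t\mid\ef_{j-1})\topr\c(\ths)^2V$. In contrast to Theorem \ref{AsNormAIS}, where this conditional second moment was a transparent integral against $h_{\psi_j}$, here $\Ex(\xi_j\xi_j\t\mid\ef_{j-1})$ is a deterministic but intricate function of $\psi_j$ alone---the full variance of the weighting/resampling/MCMC output \eqref{Increm} evaluated at $\th=\ths$. The crux is to show that the map $\psi\mapsto\Ex(\xi\xi\t)$ is continuous at $\psis$ and that its value there is precisely the $V$ in the statement (using that $\xi$ has conditional mean zero, so this second moment is the variance $\VAR[\grad\ic(\ths,\psis)-\frac{\grad\c(\ths)}{\c(\ths)}\ic(\ths,\psis)]$). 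Diminishing adaptation $\psi_j\to\psis$ (Assumption \ref{Diminishing}) with continuity (Assumption \ref{ContinuityPsi}) then forces each summand to the limit, and a Ces\`aro argument upgrades this to convergence of the average. Establishing continuity of this \texttt{ISReMC} variance functional is the substantive part; the martingale machinery and boundedness bookkeeping are routine once the finite-space exponential-family structure is in hand.
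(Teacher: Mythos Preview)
Your proposal addresses the wrong statement. The theorem labelled \texttt{SLLN} is the martingale strong law of large numbers in the Appendix: given a mean-zero martingale $X_n=\xi_1+\cdots+\xi_n$ with $\sup_j\Ex(|\xi_j|^{1+\alpha}\mid\ef_{j-1})\le M$, conclude $X_n/n\toas 0$. The paper does not prove this; it simply records it as a consequence of \cite[Theorem 2.18]{HaHey1980} and \cite{Chow1967}. What you have written is instead a (largely reasonable) proof sketch of Theorem~\ref{AsNormGeneral}, the asymptotic normality of the \texttt{AdapMCML}/\texttt{ISReMC} estimator. None of the objects $\them$, $\ths$, $\ic(\th,\psi_j)$, $\llm$, or the decomposition \eqref{NumDenom} appear in, or are relevant to, the statement you were asked to prove.

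If you intended to prove the SLLN itself, the standard route is: by the conditional Marcinkiewicz--Zygmund/Burkholder inequality or directly via \cite[Theorem 2.18]{HaHey1980}, the moment bound gives $\sum_j \Ex(|\xi_j|^{1+\alpha}\mid\ef_{j-1})/j^{1+\alpha}<\infty$ a.s.\ (since the summands are bounded by $M/j^{1+\alpha}$), which is precisely the hypothesis of Chow's SLLN \cite{Chow1967}, yielding $X_n/n\toas 0$. If instead you meant to submit a proof of Theorem~\ref{AsNormGeneral}, your outline does match the paper's own proof quite closely, including the replacement of Assumption~\ref{ConsistencySqrt} by concavity and the verification of Assumption~\ref{ASE} via bounded third derivatives; but you should resubmit it against the correct statement.
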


\end{document}